\documentclass{lmcs}
\pdfoutput=1

\usepackage{lastpage}
\lmcsdoi{16}{3}{7}
\lmcsheading{}{\pageref{LastPage}}{}{}%
{Oct.~05,~2018}{Jul.~23,~2020}{}

\usepackage{graphicx}







\usepackage[utf8]{inputenc}

\keywords{Fixed point combinator, Lambda Calculus, Bohm tree, FPC generator}

\usepackage{MnSymbol}

\usepackage{bussproofs}
\EnableBpAbbreviations

\newcommand{\wFPC}{\mathsf{(W)FPC}}

\usepackage{amsmath}
\usepackage{stmaryrd}     
\usepackage{MnSymbol}   
\usepackage{bussproofs}  
\EnableBpAbbreviations

\DeclareFontFamily{OT1}{pzc}{}   
\DeclareFontShape{OT1}{pzc}{m}{it}{<-> s * [1.10] pzcmi7t}{}
\DeclareMathAlphabet{\mathpzc}{OT1}{pzc}{m}{it}




\newcommand{\lam}{\lambda}
\newcommand{\Lam}{\Lambda}


\newcommand{\ic}{\mathtt{I}}
\newcommand{\kc}{\mathtt{K}}
\newcommand{\cnc}{\mathtt{c}}       
\newcommand{\yc}{\mathtt{Y}}


\newcommand{\setof}[1]{\{{#1}\}}

\newcommand{\fv}{\mathsf{FV}}

\newcommand{\comment}[1]{}


\newcommand{\RA}{\Rightarrow}
\newcommand{\LA}{\Leftarrow}

\newcommand{\then}{\;\Longrightarrow\;}

\newcommand{\thra}{\twoheadrightarrow}
\newcommand{\thla}{\twoheadleftarrow}



\newcommand{\df}{\quad:=\quad}


 \newcommand{\Ups}{\Upsilon}
 
 \newcommand{\notinfty}{\notin^\infty}
 \newcommand{\bte}{\mathbin{=^\infty}}


\begin{document}

\title[Fixed point combinators as fixed points of FPC generators]{Fixed point combinators as fixed points
of higher-order fixed point generators}
\author{Andrew Polonsky}
\address{Appalachian State University\\
Boone NC 28608-2133, USA}
\email{andrew.polonsky@gmail.com}

\begin{abstract}
Corrado B\"ohm once observed that if $Y$ is any fixed point combinator (fpc), then $Y(\lam yx.x(yx))$ is again fpc.  He thus discovered the first ``fpc generating scheme'' -- a generic way to build new fpcs from old.  Continuing this idea, define an \emph{fpc generator} to be any sequence of terms $G_1,\dots,G_n$ such that
\[       Y \text{ is fpc } \then  YG_1\cdots G_n \text{ is fpc}.\]
In this contribution, we take first steps in studying the structure of (weak) fpc generators.
We isolate several robust classes of such generators, by examining their elementary properties like injectivity and (weak) constancy.
We provide sufficient conditions for existence of fixed points of a given generator $(G_1,\cdots,G_n)$: an fpc $Y$ such that $Y = Y G_1 \cdots G_n$.
We conjecture that weak constancy is a necessary condition for existence of such (higher-order) fixed points.  This statement generalizes Statman's conjecture on non-existence of ``double fpcs'': fixed points of the generator
$(G) = (\lam yx.x(yx))$ discovered by B\"ohm.

Finally, we define and make a few observations about the monoid of (weak) fpc generators.
This enables us to formulate new conjectures about their structure.
\end{abstract}

\maketitle

{\centering \large \textsl{Dedicated to Corrado B\"ohm, a pioneer of the Lambda Calculus} \par}

\section{Introduction}

Fixed point combinators (fpcs) are a fascinating class of lambda terms.  Arising in the proof of the Fixed Point Theorem, their dynamical character affects the global structure of the Lambda Calculus in a fundamental way.  Being a mechanism of unrestricted recursion,
they are directly responsible for the Turing-completeness of the lambda calculus as a programming language.\footnote{
In fact, the very notion of Turing-completeness
traces back to Church's bold suggestion that lambda calculus can encode arbitrary computational processes.  Yet the idea was only accepted after Kleene and Turing, using fixed
point constructions, showed equivalence between Church's formalism
and their own ones.}
And when lambda terms are used as the computational basis of a logical system\\
\phantom{.}{\!}--- whether based on the Curry--Howard isomorphism or illative combinatory logic ---
fixed point combinators appear unexpectedly as the untyped skeletons of paradoxes, heralding inconsistency of the logic lying over the computational calculus.
\cite{B1993}
\cite{Howe1987}
\cite{CoquandHerbelin1994}
\cite{Geuvers07}
\cite{Curry1942}
\cite{MeyerReinhold1986}

It is an elementary fact that a term $Y$ is a fixed point combinator if and only if $Y$ is itself a fixed point of the combinator $\delta = \lam y x. x (y x)$.
\cite[6.5.3]{B84}
This can even be taken as the definition of fpcs: $Y \in \Lam$ is fpc iff $Y = \delta Y$.  Corrado B\"ohm noticed that also $Y \delta$ is fpc whenever $Y$ is. \cite[6.5.4]{B84}
For example, if $Y = \yc$ is Curry's fpc, then $Y \delta = \Theta$ is Turing's fpc.  A major open problem in the Lambda Calculus asks whether there exists a ``double fpc'' $Y$ satisfying $\delta Y= Y = Y \delta$.
Statman \cite{Statman1993} conjectures that no such $Y$ exists.
An early attack on this problem was undertaken by Intrigila \cite{Intrigila97}.  Unfortunately, Endrullis discovered a gap in the argument which seems difficult to overcome. As of this writing, the conjecture remains open.
For recent developments, see \cite{EHKP}, \cite{MPSS}.
We will also discuss the conjecture in Section \ref{s:4}.

B\"ohm's observations revealed that fpcs themselves have a compositional structure, where one constructs new fpcs from old by applying them to $\delta$.  Since then, other ``fpc generating schemes'' have been discovered and investigated by several authors. \cite{Scott75} \cite{EHKP} These contributions have confirmed that fpcs have a  rich mathematical structure indeed.

In this paper, we will explore such fpc generators ``in the abstract'', studying their general properties and providing a basic taxonomy.  We formulate several new problems, including a significant strengthening of Statman's conjecture.


\section{Notations and definitions}

\begin{nota}
We assume the reader is familiar with the basic notions of lambda calculus:
$\lambda$-terms, free variables, substitution, and beta-conversion.
We refer to \cite{B84} for background on these matters.
Here we shall employ the following symbols and notions.

\begin{itemize}
	\item $\Lambda$ is the set of $\lambda$-terms.
    $\Lambda^0 = \setof{M \in \Lambda \mid \fv(M)=\emptyset}$ is the set
    of \emph{closed} $\lam$-terms.
	\item $\fv(M)$ is the set of free variables of $M \in \Lambda$.
	\item $M[x:=N]$ is the result of capture-avoiding substitution of $N$ for $x$ in $M$.
	\item If $\vec N = (N_1,\dots,N_k)$ is a sequence of $\lambda$-terms, then $M \vec N = MN_1\cdots{}N_k$.
	\item $F^k(z) := F(F(\cdots F(z)\cdots))$, with $k$ $F$s.
	\item $\ic = \lambda x.x$, $\kc=\lambda xy.x$, $\cnc_k = \lambda x y. x^k(y)$,
	$\delta = \lambda y x. x (y x)$. 
	\item $M=N$ denotes beta conversion between $M$ and $N$.
	\item $M \thra N$ denotes beta reduction from $M$ to $N$.
	\item $M$ is \emph{solvable}, if $M \vec N = \ic$ for some $\vec N$.  Otherwise, $M$ is \emph{unsolvable}.
	\item $M \bte N$ if $M$ and $N$ have the same B\"ohm tree.\\  We note without proof that this relation can be defined using one axiom and one inference rule, the latter to be understood \emph{coinductively}
	(see \cite{KKSV97},\cite{EP11}, and especially \cite[Def.\,5.6]{Czajka2020}):
	\[
		\AXC{$M, N \! \phantom{\vec M} \! \text{unsolvable}$}
		\UIC{$M \bte N$}
		\DisplayProof
		\qquad \! \!
		\AXC{$M = \lambda \vec x. y \vec M \quad N = \lambda \vec x. y \vec N$}
		\AXC{$M_1 \bte N_1\quad \cdots \quad M_k \bte N_k$}
		\doubleLine
		\BIC{$M \bte N$}
		\DisplayProof
	\]
	\item $z\# M$ means $z \notin \fv(M)$.  For $S \subseteq \Lambda$,
	$z\#S$ means $z\#M$ for each $M \in S$.
    \item $z \notin M$ if there exists $N = M$ such that $z\#N$.
    $z \in M$ if $z \in \fv(N)$ for all $N = M$.
    \item $z \notinfty M$ if there exists $N =^\infty M$ such that $z\#N$.
    Otherwise, $z \in^\infty M$.
\end{itemize}
\end{nota}

\begin{defi}
	$Y \in \Lambda$ is a \emph{fixed point combinator (fpc)} if $Yx = x(Yx)$ for $x \# Y$.
\end{defi}

\begin{defi}
	$Y \in \Lambda$ is a \emph{weak fixed point combinator (wfpc)} if $Yx \bte x(Yx)$ for $x \# Y$.
\end{defi}

Notice that every fpc is a wfpc.  See Examples \ref{ex1} for both types of terms.

All (w)fpcs have the same B\"ohm tree, so
$Y \in \Lam$ is wfpc iff $Y \bte Y_0$ for some fpc $Y_0$.

A wfpc $Y$ can equivalently be given by a sequence of terms $(Y_n)$ with $Y=Y_0$ and $Y_n x = x (Y_{n+1} x)$, with $x \# \{Y_n, Y_{n+1}\}$.
\cite[Prop.\,3.9]{MPSS}
If $Y$ is fpc, then $Y_n=Y_0$ for all $n$.


 \newcommand{\FPC}{\mathsf{FPC}}
 \newcommand{\FPCo}{\mathsf{FPC}^0}
 \newcommand{\WFPC}{\mathsf{WFPC}}
 \newcommand{\WFPCo}{\mathsf{WFPC}^0}
 \newcommand{\fpct}{\quad \text{ fpc }}
 \newcommand{\wfpct}{\quad \text{wfpc }}

\begin{nota}
	We write $\FPC$ ($\WFPC$) for the set of fpcs (weak fpcs).
\end{nota}


\begin{nota}
	Henceforth, we shall often write $\mathsf{(W)FPC}$ in a sentence that is meant to apply to both $\FPC$ and $\WFPC$.  Such a statement should always be read as a conjunction of two statements: one, in which parentheses are ignored together with their contents, and another, where parentheses are removed but their contents remain.
\end{nota}


\begin{defi}
A \emph{(weak) fpc generating vector}, or \emph{(w)fgv}, is a sequence of terms $\vec G$ satisfying
\[ Y \in \wFPC \then Y \vec G \in \wFPC. \]
\end{defi}

\begin{prop} TFAE: \label{tfae} \\
$(i)\quad\ \ \vec G$ is wfgv.\\
$(ii)\quad \ Y \in \FPC \then Y \vec G \in \WFPC$.\\
$(iii)\quad Y \vec G \in \WFPC$ for some $Y \in \FPC$.
\end{prop}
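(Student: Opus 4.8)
The plan is to establish the cycle of implications $(i) \then (ii) \then (iii) \then (i)$. The first two links are immediate from facts already recorded in the excerpt, so essentially all the content lies in closing the cycle with $(iii) \then (i)$.

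For $(i) \then (ii)$: since every fpc is a wfpc, any $Y \in \FPC$ satisfies $Y \in \WFPC$, whence $(i)$ yields $Y\vec G \in \WFPC$, which is exactly $(ii)$. For $(ii) \then (iii)$: fpcs exist (for instance Curry's $\yc$), so instantiating $(ii)$ at any single fpc produces a witness for the existential statement $(iii)$. Both of these I would dispatch in a line each.

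The substantive implication is $(iii) \then (i)$, and here is how I would argue it. Suppose $Y_0 \vec G \in \WFPC$ for some fixed $Y_0 \in \FPC$, and let $Y \in \WFPC$ be arbitrary; the goal is $Y\vec G \in \WFPC$. The key structural input is that all (weak) fpcs share one and the same B\"ohm tree, so in particular $Y \bte Y_0$. The crux is to transport this B\"ohm-tree identity across the suffix $\vec G$, i.e.\ to conclude $Y\vec G \bte Y_0\vec G$. Granting this, I finish using the B\"ohm-tree characterization of wfpcs already stated in the excerpt: from $Y_0\vec G \in \WFPC$ we obtain $Y_0\vec G \bte Z$ for some fpc $Z$, and transitivity of $\bte$ then gives $Y\vec G \bte Z$, so $Y\vec G \in \WFPC$.

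The one nontrivial ingredient, and the step I expect to be the main obstacle, is the congruence of $\bte$ in function position: $M \bte N \then M P \bte N P$. Applying this $\absof{\vec G}$ times, once for each component of $\vec G$ (keeping the argument fixed at each stage), yields $Y\vec G \bte Y_0\vec G$ from $Y \bte Y_0$. I would justify the congruence by appeal to the standard fact that B\"ohm-tree equality is a congruence on $\Lam$ (equivalently, that the B\"ohm-tree semantics forms a $\lambda$-model). The delicate point to keep in mind is that $\bte$ cannot be verified by naive ``tree grafting'' — substituting an argument into a leading abstraction of the B\"ohm tree can unlock fresh head reductions — so the congruence does not follow termwise from the trees and genuinely relies on the congruence theorem, which can alternatively be derived directly from the coinductive presentation of $\bte$ given in the notation section.
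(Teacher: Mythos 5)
Your proposal is correct and follows essentially the same route as the paper: both reduce $(iii)\Rightarrow(i)$ to the fact that all wfpcs share a B\"ohm tree, push this through $\vec G$ via the congruence $Z \bte Y \then Z\vec G \bte Y\vec G$, and conclude by the B\"ohm-tree characterization of $\WFPC$. The paper states the congruence step without comment, whereas you correctly isolate it as the one nonobvious ingredient and justify it via the standard fact that $\bte$ is a $\lambda$-theory.
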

\begin{proof}
$(i) \RA (ii)$. Let $\vec G$ be wfgv, $Y$ be fpc.
Then $Y$ is wfpc, and $Y \vec G$ is wfpc.

$(ii) \RA (iii)$. Trivial.

$(iii) \RA (i)$.  Let $Z$ be wfpc.  Then $Z =^\infty Y$.
Also $Z \vec G =^\infty Y\vec G$ is wfpc.
\end{proof}

\begin{cor} Every fpc generator is wfpc generator.
\label{fpwpgen}
\end{cor}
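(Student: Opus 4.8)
The plan is to unwind the $\wFPC$ abbreviation and then invoke Proposition \ref{tfae}. Under the stated convention, the hypothesis that $\vec G$ is an \emph{fpc generator} (fgv) reads $Y \in \FPC \then Y\vec G \in \FPC$, whereas the desired conclusion that $\vec G$ is a \emph{wfpc generator} (wfgv) reads $Y \in \WFPC \then Y\vec G \in \WFPC$. The two statements quantify over different classes of terms ($\FPC$ versus $\WFPC$), so the content of the corollary is exactly to bridge this gap.

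First I would recall the inclusion $\FPC \sse \WFPC$ observed immediately after the definition of wfpc: every fpc is a wfpc. This is the only ``weakening'' fact we shall need.

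Next, rather than verifying the wfgv condition directly for every $Y \in \WFPC$, I would appeal to Proposition \ref{tfae}, whose equivalent characterizations of being wfgv quantify only over $\FPC$. Concretely, it suffices to establish clause $(ii)$, namely $Y \in \FPC \then Y\vec G \in \WFPC$ (or equally clause $(iii)$, the same assertion for a single fpc). So take any $Y \in \FPC$. By the fgv hypothesis, $Y\vec G \in \FPC$, and hence $Y\vec G \in \WFPC$ by the inclusion above. This yields clause $(ii)$, and Proposition \ref{tfae} then delivers that $\vec G$ is wfgv.

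There is no genuine obstacle here: the corollary is a one-line consequence of Proposition \ref{tfae} together with $\FPC \sse \WFPC$. The only point that demands care is the bookkeeping of the $(W)$ convention, i.e.\ correctly reading the single displayed implication in the definition of (w)fgv as the two separate statements for $\FPC$ and $\WFPC$, and noticing that the asymmetry between the strong hypothesis and the weak conclusion is resolved precisely by the proposition, which lets information about the fpc case ($Y \in \FPC$) govern the full wfpc case ($Y \in \WFPC$).
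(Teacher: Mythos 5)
Your proposal is correct and matches the paper's own argument: the paper likewise picks an fpc $Y$, notes that $Y\vec G$ is fpc (hence wfpc), and concludes via Proposition \ref{tfae}(iii) that $\vec G$ is a wfgv. The only difference is that you spell out the $\FPC \sse \WFPC$ inclusion and the $(W)$-convention bookkeeping explicitly, which the paper leaves implicit.
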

\begin{proof}
Let $\vec G$ be fpc generator. Pick $Y \in \FPC$.
Then $Y \vec G$ is fpc, hence $\vec G$ is wfgv.
\end{proof}

\begin{prop}
\label{conds}
Consider the following conditions on $\vec G$.
\newcommand{\pue}{\hspace{-0.3cm}}
\begin{align*}
\begin{array}{l r l}
(i)\hspace{2cm} &Y\phantom{j} \fpct \pue&\then Y \vec G \fpct \\
(ii) \hspace{2cm}&Y\, \wfpct \pue&\then Y \vec G \wfpct \\
(iii) \hspace{2cm} &Y\phantom{j} \fpct \pue&\then Y \vec G \wfpct \\
(iv) \hspace{2cm} &Y\, \wfpct \pue&\then Y \vec G \fpct \\
\end{array}
\end{align*}
 The following relations are valid:
\[ (iv) \then (i) \then (ii) \iff (iii) \]
\end{prop}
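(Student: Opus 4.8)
The plan is to establish the three implications $(iv) \Rightarrow (i)$, $(i) \Rightarrow (ii)$, and the equivalence $(ii) \iff (iii)$ separately, leaning throughout on the two structural facts already recorded in the excerpt: that every fpc is a wfpc, and that all (w)fpcs share a single B\"ohm tree, so that $Y \in \WFPC$ iff $Y \bte Y_0$ for some fpc $Y_0$, and moreover $Z$ wfpc implies $Z =^\infty Y$ for any fpc $Y$.

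For $(iv) \Rightarrow (i)$: assume $(iv)$ and let $Y$ be an fpc. Since every fpc is a wfpc, $Y$ is wfpc, so by $(iv)$ the term $Y\vec G$ is an fpc. This is exactly the conclusion of $(i)$. For $(i) \Rightarrow (ii)$: assume $(i)$ and let $Y$ be wfpc. By the characterization above, $Y \bte Y_0$ for some fpc $Y_0$, and hence $Y =^\infty Y_0$. By $(i)$, $Y_0 \vec G$ is an fpc, in particular a wfpc. Now I would argue that $\bte$ (equivalently $=^\infty$) is preserved under application on the left: from $Y =^\infty Y_0$ we get $Y\vec G =^\infty Y_0 \vec G$, and since $Y_0\vec G$ is wfpc and all wfpcs have the same B\"ohm tree, $Y\vec G$ is wfpc as well. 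This is precisely the reasoning already used in the proof of $(iii)\Rightarrow(i)$ of Proposition~\ref{tfae}, so I would cite that stability of $=^\infty$ under application rather than reprove it.

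For the equivalence $(ii) \iff (iii)$: the direction $(ii) \Rightarrow (iii)$ is immediate, since any fpc is in particular a wfpc, so $(ii)$ applied to an fpc $Y$ yields that $Y\vec G$ is wfpc, which is $(iii)$. For the converse $(iii) \Rightarrow (ii)$, assume $(iii)$ and let $Y$ be wfpc; I want $Y\vec G$ wfpc. Fix any fpc $Y_0$. By $(iii)$, $Y_0 \vec G$ is wfpc. Since $Y$ and $Y_0$ are both wfpcs they have the same B\"ohm tree, so $Y =^\infty Y_0$, whence $Y\vec G =^\infty Y_0\vec G$ by stability of $=^\infty$ under application. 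As $Y_0\vec G$ is wfpc and being wfpc depends only on the B\"ohm tree, $Y\vec G$ is wfpc. This is the same pattern as Proposition~\ref{tfae}, so in fact $(ii)$, $(iii)$, and the statement ``$\vec G$ is a wfgv'' all coincide, which explains the equivalence.

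The main obstacle, and the only step requiring genuine care, is the congruence property that $M =^\infty N$ implies $M\vec G =^\infty N\vec G$, i.e. that B\"ohm tree equivalence is a congruence with respect to application on the left. Everything else is bookkeeping with the inclusion $\FPC \subseteq \WFPC$ and the single-B\"ohm-tree fact. I would note that this congruence is standard for the infinitary lambda calculus and is exactly what was tacitly invoked in Proposition~\ref{tfae}; for a self-contained treatment one can appeal to the coinductive characterization of $\bte$ given in the Notation section, from which preservation under applicative contexts follows by the usual coinductive argument. With that in hand, the diagram $(iv)\Rightarrow(i)\Rightarrow(ii)\iff(iii)$ is fully established.
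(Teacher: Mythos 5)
Your proposal is correct and follows essentially the same route as the paper: the paper's proof of this proposition is a one-line remark that the relations ``simply summarize the facts noted above,'' namely the inclusion $\FPC \subseteq \WFPC$, the fact that all wfpcs share a B\"ohm tree, and the equivalences of Proposition~\ref{tfae} (whose items are exactly conditions $(ii)$ and $(iii)$ here), together with the congruence of $=^\infty$ under application that you rightly flag as the only step needing care. Your write-up just makes explicit what the paper leaves implicit.
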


\begin{proof}
These relations simply summarize the facts noted above.
\end{proof}

\section{Examples and first observations}

\begin{exas}\leavevmode \label{ex1}
\begin{itemize}
    \item
    \emph{Turing's fpc.} Let $\Theta x = V V x$, where $V = \lambda v x. x (v v x)$. Then $\Theta \in \FPC$.
	\item
	\emph{Parametrized Turing's fpc.}
	For $M \in \Lam$, let $\Theta_M x = V V M x$, where $V = \lambda v m x. x (v v m x)$.
    Then $\Theta_M \in \FPC$.
		(This example can be generalized to have multiple parameters.)
  \item Let $z$ be a variable. Put $\Psi_z = W_z W_z \ic$,
    where $W_z = \lambda w p x. x (w w (z p) x)$.
    Then $\Psi_z \in \WFPC\setminus\FPC$.
	\item A slight variant of the above will play a central role in the proof of our main result.
	Let $c$ be a variable.  Put $\Ups = \lambda x. V_x \ic V_x$, where
	$V_x = \lam p v. x (v (c p) v)$. Then $\Ups \in \WFPC\setminus\FPC$.

	A nice feature of this wfpc is
	that it has a very simple reduction graph.
\end{itemize}
\end{exas}

\begin{prop}
	$\Theta_{M} = \Theta_{N} \then M = N$.
	\label{thetaz}
\end{prop}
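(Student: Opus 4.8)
The plan is to exploit the fact that the parameter $M$ is carried through the reduction of $\Theta_M$ as an inert ``payload'' that is never duplicated, erased, or turned into the operator of a redex, and then to recover it up to conversion from any common reduct of $\Theta_M$ and $\Theta_N$. First I would record the basic reduction behaviour: writing $V = \lam vmx. x(vvmx)$, one computes $VV \thra \lam mx. x(VVmx)$ and hence $\Theta_M = VVM \thra \lam x. x(VVMx) = \lam x. x(\Theta_M x)$, so that $\Theta_M$ unfolds to the infinite tower $\lam x. x(x(x(\cdots)))$. In particular $M$ never surfaces in the B\"ohm tree --- indeed $\Theta_M \bte \Theta_N$ for all $M, N$ --- so B\"ohm-tree or applicative methods cannot separate $\Theta_M$ from $\Theta_N$, and the statement is genuinely about the finitary relation of $\beta$-conversion.

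Accordingly, by the Church--Rosser theorem, $\Theta_M = \Theta_N$ yields a common reduct $P$ with $\Theta_M \thra P \thla \Theta_N$. To analyse $P$ I would introduce a fresh marker variable $m$ and view $\Theta_M \equiv (VVm)[m := M]$ and $\Theta_N \equiv (VVm)[m := N]$, using the variable convention to assume $x \# M, N$ and $m \# M, N$. The key structural observation is that in the skeleton $VVm$ the marker $m$ occurs exactly once and only in argument position, so along every reduction it stays frozen. I would make this precise as a factorization/residual lemma: every reduction $\Theta_M \thra P$ splits into a skeleton reduction $VVm \thra Q$ (with $Q$ still containing a single $m$) together with an internal reduction $M \thra M''$ of the payload, so that $P \equiv Q[m := M'']$; and dually $P \equiv Q'[m := N'']$ from the $\Theta_N$ side.

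The endgame is then to force the two decompositions of the single term $P$ to agree. Continuing the skeleton reduction, I would reduce $P$ further to a canonical clean form $P^* = \lam x. x^j(VV\,[\text{payload}]\,x)$ for some $j$; since $\Theta_M \thra P \thra P^*$ and $\Theta_N \thra P \thra P^*$, this $P^*$ is again a common reduct, and because the payload is frozen it occurs in $P^*$ as a single subterm in the innermost argument slot of $VV$ --- equal to some $M^*$ with $M \thra M^*$ when read off from the $\Theta_M$ side, and to some $N^*$ with $N \thra N^*$ from the $\Theta_N$ side. Here is where $x \# M, N$ does the real work: since neither payload contains the bound variable $x$ free, the leading tower of $x$-applications of $P^*$ is contributed entirely by the skeleton, so the number $j$ and the position of the payload slot are determined by $P^*$ alone and coincide for both sides. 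Matching $P^*$ syntactically thus forces $M^* \equiv N^*$ up to $\alpha$, whence $M \thra M^* \equiv N^* \thla N$ and $M = N$.

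The main obstacle is the factorization/residual lemma and, with it, the rigidity of the payload's position: one must verify that reductions inside $M$ neither create nor destroy skeleton redexes and that the inert occurrence of $M$ can never merge into the skeleton structure. Both facts hinge precisely on the marker's single occurrence in argument position and on the freshness $x \# M, N$; once these are in place, the combinatorial bookkeeping of tracking the unique residual through the remaining head reductions is routine.
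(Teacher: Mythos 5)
Your proposal is correct and follows essentially the same route as the paper, which simply observes that the claim is ``manifest upon inspecting the reduction graph of $\Theta_z$'' and defers the details to the cited lemma of Manzonetto et al.: the parameter occurs exactly once in the skeleton $VVz$, always in argument position, so it is never duplicated, erased, or activated, and a common reduct of $\Theta_M$ and $\Theta_N$ therefore exhibits a common reduct of $M$ and $N$. Your elaboration via the marker variable, the skeleton/payload factorization, and the canonical form $\lam x.\,x^j(VV\,[\cdot]\,x)$ is exactly the precise version of that inspection.
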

\begin{proof}
	This is manifest upon inspecting the reduction graph of $\Theta_z$ --- the set of reducts of $\Theta_z$.  For a precise proof, see \cite[Lemma 4.1]{MPSS}.
\end{proof}
\pagebreak

\begin{exas}\label{exs}\leavevmode
\begin{itemize}
	\item Let $\vec G = ()$, the empty vector.  Obviously,
    $Y \in \mathsf{(W)FPC} \then Y \vec G = Y \in \mathsf{(W)FPC}$.\\
    We call this generator \emph{trivial}.
    In subsequent sections, we will tacitly assume all generators to be non-trivial.
    \item Fix a (w)fpc $Y$, and let $\vec G = (\kc Y)$.  Then
    $(\kc Y)$ yields the same (w)fpc on every input:
    \[ Z=(Z_0,Z_1,\dots) \in \WFPC \then Z_0 (\kc Y) = \kc Y (Z_1 (\kc Y)) = Y \in \mathsf{(W)FPC}.\]
    We call such generators \emph{constant}.
    Their only interesting feature is the fixed point $Y=Y\vec G$.

    \item  Recall that $\delta y x = x (y x)$.  It is easy to verify the following: \label{e:delta}
    \begin{itemize}
    \item
    $\delta^k(z)x = x^k(z x)$.
    \item
    If $Y$ is fpc, then $Y=\delta Y = \delta^k(Y)$.
    \item
    If $Y=(Y_n)$ is wfpc, then $Y_0 = \delta^k(Y_k)$.
    \end{itemize}
    Let $\vec G = (\delta)$.  Then
    $Y \in \FPC \then Y \delta x = \delta (Y \delta) x = x (Y \delta x) \then Y \delta \in \FPC$.\\
    As noted in the introduction, it is open whether there exists $Y {\in}\, \mathsf{(W)FPC}$ such that $Y {=}\, Y \delta$.

    \item Let $\vec G = (\lambda y. \Theta_y)$.  Then
    \label{theta}
    \[ Y \in \FPC \then Y \vec G = Y (\lambda y. \Theta_y)
    = (\lambda y. \Theta_y) (Y (\lam y. \Theta_y))
    = \Theta_{Y \vec G} \in \FPC.\]
    Furthermore, there exists fpc $Y$ such that $Y = Y \vec G $.\\
    Indeed, take $Y = \Theta (\lam x. \Theta_{x (\lam y. \Theta_y)})
    = \Theta_{Y (\lam y. \Theta_y)}$.  Then $Y \in \FPC$, and
    \[ Y (\lam y. \Theta_y)
    	= (\lam y. \Theta_y) (Y (\lam y. \Theta_y))
    	= \Theta_{Y (\lam y. \Theta_y)} = Y. \]

		\item Yet another single-term fgv is given by
    $\vec G = (\lambda y x. x (y (\kc [y,x]) \ic))$, where $[P,Q]=\lambda z. zPQ$:
		\begin{align*}
			Y \in \FPC \then Y G_0 x = G_0 (Y G_0) x
			&= x (Y G_0 (\kc [Y G_0,x]) \ic)\\
			&= x (G_0 (Y G_0) (\kc [Y G_0,x]) \ic)\\
			&= x (\kc [Y G_0, x] (\cdots) \ic)\\
			&= x ([Y G_0, x] \ic) = x (\ic (Y G_0) x) = x (Y G_0 x)
		\end{align*}
    \item
    The set of (w)fgvs is closed under composition:
    if $\vec G$ and $\vec G'$ are fgvs, then
    \[ Y \in \FPC \then Y \vec G \in \FPC \then Y \vec G \vec G' \in \FPC. \]
    Thus, $(\delta,\lam y. \Theta_y)$ and $(\lam y. \Theta_y,\delta)$ are both fgvs.
    \item
    Many other examples of fpcs and fgvs can be found in \cite{EHKP} and \cite{MPSS}.
\end{itemize}
\end{exas}


\begin{defi}
A (w)fgv $\vec G$ is \emph{injective} if for all (w)fpcs $Y,Y'$, $Y \vec G = Y' \vec G$ implies $Y = Y'$.
\end{defi}

\begin{prop} \label{noninj}
No non-trivial (w)fgv is injective.
\end{prop}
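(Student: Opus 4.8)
The plan is to refute injectivity directly: for an arbitrary non-trivial $\vec G = (G_1,\dots,G_n)$ (so $n \geq 1$) I will exhibit two $\beta$-distinct (w)fpcs $Y \neq Y'$ with $Y\vec G = Y'\vec G$. Since $\beta$-conversion is preserved by applying the fixed suffix $G_2 \cdots G_n$, it is enough to produce distinct (w)fpcs that already agree at the first component, $Y G_1 = Y' G_1$. The presence of this one component is what drives the whole argument: for the trivial generator one has $Y() = Y$, so $Y() = Y'()$ forces $Y = Y'$, i.e.\ the empty generator is injective. Thus non-triviality is used precisely to supply a $G_1$ that can absorb a difference between $Y$ and $Y'$.

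The organizing observation is an asymmetry between the two notions. If $Y$ is a genuine fpc then $Yx = x(Yx)$ holds for a fresh $x$, and since $\beta$-equality is closed under substitution we may instantiate $x := G_1$ to get $Y G_1 = G_1(Y G_1)$: every fpc sends $G_1$ to an honest $\beta$-fixed point of $G_1$. For a mere wfpc only $Yx \bte x(Yx)$ is available, and this is not a $\beta$-equation that survives substitution, so $Y G_1$ is merely required to have the correct Böhm tree and need not be a $\beta$-fixed point of $G_1$. This slack makes the weak case the easy one: fixing a base fpc $Y$ and its value $a := Y G_1$, I would hand-build a second wfpc $Y'$ whose Böhm tree is the common one of all (w)fpcs but which $\beta$-reduces to $a$ when applied to $G_1$, while differing from $Y$ on the generic argument. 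The wfpcs with transparent reduction graphs from Examples \ref{ex1}, such as $\Ups$, are convenient scaffolding for controlling exactly where the distinguishing subterm travels.

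The main obstacle is the strong (fpc) case, where the rigidity $Y G_1 = G_1(Y G_1)$ leaves no such slack: a collision forces two structurally distinct fpcs to reach one and the same $\beta$-fixed point of $G_1$, even though distinct fpcs already disagree on the generic argument. The naive route through a parametrized family fails here — by the kind of analysis underlying Proposition \ref{thetaz}, in $\Theta_M$ the parameter $M$ is driven to infinity in the Böhm tree yet occurs in every finite reduct, so $\Theta_M G_1 \neq \Theta_N G_1$ for $M \neq N$ and this family never collides. What is required instead is a pair obtained by rerouting the recursion of a standard fpc so that the two unfoldings become genuinely $\beta$-convertible after contact with $G_1$ while remaining $\beta$-distinct beforehand; pinning down such a pair uniformly in $\vec G$, and certifying honest $\beta$-equality of the images rather than mere Böhm-tree equality, is the step I expect to be hardest.
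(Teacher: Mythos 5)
Your proposal is not a complete proof: you correctly reduce the problem to producing two $\beta$-distinct (w)fpcs $Y \neq Y'$ whose difference is absorbed by the first component of $\vec G$, but you stop exactly where the construction of such a pair is required, and you explicitly flag that construction as the step you cannot yet carry out. Since that construction is the entire content of the proposition, what you have is a statement of the goal rather than a proof. Worse, the consideration that steers you away from the solution is a misdiagnosis. You argue that the parametrized family $\Theta_M$ ``never collides'' because the parameter survives in every reduct, so $\Theta_M G_1 \neq \Theta_N G_1$ whenever $M \neq N$. That is true for \emph{fixed} parameters, but the paper's proof uses parameters that \emph{depend on the argument}: writing $\vec G = (G_0,\dots,G_n)$, one first observes that $G_0$ must be solvable (otherwise $\Theta \vec G = G_0(\Theta G_0)G_1\cdots G_n$ would be unsolvable, contradicting that it is a (w)fpc), hence $G_0 \vec P = \ic$ for some closed $\vec P$. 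Then $Y = \Theta_{x\vec P}$ and $Y' = \Theta_{x\vec P\,\ic}$ --- parametrized Turing fpcs whose parameter is computed from the input $x$ --- are distinct fpcs by Proposition \ref{thetaz}, since $x\vec P$ and $x\vec P\,\ic$ are not convertible; yet on input $G_0$ both parameters evaluate to $\ic = \ic\,\ic$, so
\[ Y\vec G = \Theta_{G_0\vec P}\vec G = \Theta_{\ic}\vec G = \Theta_{G_0\vec P\,\ic}\vec G = Y'\vec G. \]
The collision you were looking for is thus engineered through solvability of the head component, an ingredient entirely absent from your proposal.

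A smaller issue: your ``easy'' weak case is also not carried out --- you describe hand-building a wfpc $Y'$ with $Y'G_1 =_\beta YG_1$ but exhibit no term, and the obvious candidates (e.g.\ $\lambda x.\, x(Yx)$) can accidentally be convertible to $Y$ itself. In fact no separate weak case is needed: the pair $Y, Y'$ above consists of closed fpcs, hence of wfpcs, so the single construction refutes injectivity for both readings of the statement at once.
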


\begin{proof}
 Suppose $\vec G = (G_0,\dots,G_n)$, for $n \ge 0$, is injective. Since $\Theta \vec G = G_0(\Theta G_0) G_1 \cdots G_n$ is (w)fpc, $G_0$ must be solvable. That is, $G_0 \vec P = \ic$ for some closed $\vec P$.
 Put $Y = \Theta_{x \vec P}$, $Y' = \Theta_{x \vec P \ic}$.
 By Proposition \ref{thetaz}, $Y \neq Y'$.  Yet
\begin{align*}
 Y \vec G = \Theta_{G_0 \vec P} \vec G
          = \Theta_{\ic} \vec G
          = \Theta_{G_0 \vec P \ic} \vec G
          = Y' \vec G.
\end{align*}
It follows that $\vec G$ is not injective.
\end{proof}

(Notice that in the above proof both $Y$ and $Y'$ are closed, so even restricting injectivity hypothesis to closed terms, no non-trivial wfpc generator is injective.)

\begin{cor}
	\label{triv}
	Suppose wfgv $\vec G$ fixes every fpc: $Y \vec G = Y$ for all fpc $Y$.
	Then $\vec G$ is trivial.
\end{cor}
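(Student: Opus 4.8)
The plan is to argue by contraposition, reusing the very construction from the proof of Proposition~\ref{noninj}. The crucial observation is that the two terms exhibited there to defeat injectivity are not merely wfpcs but genuine fpcs: they are the parametrized Turing combinators $Y = \Theta_{x\vec P}$ and $Y' = \Theta_{x\vec P\ic}$, both members of $\FPC$ by Examples~\ref{ex1}. Thus the non-injectivity witnessed in Proposition~\ref{noninj} is already witnessed \emph{within} $\FPC$, and this is exactly what makes the corollary fall out.

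First I would assume, toward a contradiction, that $\vec G = (G_0,\dots,G_n)$ is non-trivial, i.e. $n \ge 0$. As in Proposition~\ref{noninj}, since $\Theta\vec G$ is wfpc the head term $G_0$ must be solvable, so $G_0\vec P = \ic$ for some closed vector $\vec P$. Setting $Y = \Theta_{x\vec P}$ and $Y' = \Theta_{x\vec P\ic}$, Proposition~\ref{thetaz} gives $Y \neq Y'$, while the same chain of conversions as before collapses both to $\Theta_\ic\vec G$, so that $Y\vec G = Y'\vec G$.

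Now I would invoke the hypothesis that $\vec G$ fixes every fpc. Since $Y, Y' \in \FPC$, this yields $Y\vec G = Y$ and $Y'\vec G = Y'$. Chaining these with the equality $Y\vec G = Y'\vec G$ produces $Y = Y\vec G = Y'\vec G = Y'$, contradicting $Y \neq Y'$. Hence no non-trivial $\vec G$ can fix all fpcs, and $\vec G$ must be trivial.

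I expect the only delicate point to be bookkeeping rather than a genuine obstacle: one must verify that the witnesses from Proposition~\ref{noninj} lie in $\FPC$ and not merely $\WFPC$, so that the fixed-point hypothesis is applicable to them. Because parametrized Turing combinators are honest fixed point combinators, this is immediate, and the corollary is essentially a direct reading of the non-injectivity argument already established, specialized to the case where the generator fixes each of its fpc inputs.
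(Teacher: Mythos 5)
Your proof is correct and is precisely the argument the paper intends: the corollary is stated without proof as an immediate consequence of Proposition~\ref{noninj}, whose non-injectivity witnesses $\Theta_{x\vec P}$ and $\Theta_{x\vec P\ic}$ are genuine (closed) fpcs, so a non-trivial generator fixing every fpc would force $Y = Y\vec G = Y'\vec G = Y'$ against Proposition~\ref{thetaz}. Your explicit check that the witnesses lie in $\FPC$ rather than merely $\WFPC$ is exactly the right point to verify, and nothing further is needed.
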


An interesting consequence of these observations is that there is no uniform way to ``B\"ohm out'' an inner level of a wfpc.

\begin{prop}  For $m > 0$, it is not possible to find terms $(M_0,\dots,M_n)$ such that
\begin{equation}
	\label{Zm}
 Z = (Z_n)\text{ wfpc}  \quad \then \quad Z_0 \vec M = Z_{m}.
\end{equation}
\end{prop}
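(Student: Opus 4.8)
The plan is to show that any vector $\vec M$ satisfying (\ref{Zm}) would have to fix every fpc, which by the results just proved forces $\vec M$ to be trivial; but a trivial vector manifestly cannot shift the level of a genuine (non-fpc) wfpc such as $\Psi_z$, giving a contradiction.

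First I would feed fpcs into (\ref{Zm}). Recall that an fpc $Y$ has a constant level decomposition, $Y_n = Y$ for all $n$. Hence the conclusion of (\ref{Zm}) specializes to $Y \vec M = Y_m = Y$, so $\vec M$ fixes every fpc. In particular $Y \vec M = Y \in \FPC \subseteq \WFPC$ for some (indeed every) fpc $Y$, so Proposition \ref{tfae} (iii)$\then$(i) shows that $\vec M$ is a wfpc generating vector. Now $\vec M$ is a wfgv fixing every fpc, so Corollary \ref{triv} applies and $\vec M$ must be \emph{trivial}, i.e.\ the empty vector.

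It then remains to observe that the empty vector does not satisfy (\ref{Zm}) when $m > 0$. With $\vec M = ()$ the conclusion of (\ref{Zm}) reads $Z_0 = Z_m$ for every wfpc $Z = (Z_n)$. I would refute this using $\Psi_z = W_z W_z \ic$ from Examples \ref{ex1}: a head reduction gives $W_z W_z t \thra \lambda x. x (W_z W_z (z t) x)$ for any $t$, so the canonical levels of $\Psi_z$ are $Z_n = W_z W_z (z^n(\ic))$, with $Z_0 = W_z W_z \ic$ and $Z_m = W_z W_z (z^m(\ic))$. Since $m > 0$, asserting $Z_0 = Z_m$ is false, which is the desired contradiction.

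The main obstacle is verifying that these levels are genuinely distinct \emph{modulo $\beta$}, that is $Z_0 \neq Z_m$, and not merely B\"ohm-tree equal (they do all satisfy $Z_0 \bte Z_m$, as $\Psi_z$ is a wfpc). Here the free variable $z$ acts as a counter: because $W_z W_z t$ reduces by pushing $t$ one level inward and wrapping it in a fresh $z$, any common reduct of $W_z W_z \ic$ and $W_z W_z (z^m(\ic))$ would have to display the parameters $z^k(\ic)$ and $z^{m+k}(\ic)$ at the same B\"ohm-tree depth, which is impossible for $m > 0$ with $z$ free. This parameter-tracking phenomenon is exactly the one underlying Proposition \ref{thetaz}, and it is where the real work lies; the remainder of the argument is bookkeeping with the earlier propositions.
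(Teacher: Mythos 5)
Your proof is correct and follows essentially the same route as the paper's: deduce that $\vec M$ fixes every fpc, invoke Corollary \ref{triv} to conclude $\vec M = ()$, and then observe that the trivial vector cannot satisfy \eqref{Zm}. You supply more detail than the paper on the final step --- exhibiting $\Psi_z$ and arguing that its levels $W_z W_z(z^k(\ic))$ are pairwise distinct modulo $\beta$ by the parameter-tracking argument of Proposition \ref{thetaz} --- a point the paper leaves implicit.
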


\begin{proof}
Suppose such $\vec M = (M_0,\dots,M_n)$ exists.
Then $\vec M$ is a wfgv.
For every fpc $Y$, we have $Y \vec M = Y$, so every fpc is fixed
by $\vec M$.
(In particular, $\vec M$ is a fgv.)
By Corollary \ref{triv}, $\vec M$ is trivial: $\vec M = ()$.
But then $\vec M$ fixes every wfpc as well, and thus cannot satisfy the hypothesis in \eqref{Zm}.
\end{proof}

\section{The four main classes of generators} \label{s:4}

Let us consider again Statman's conjecture on the non-existence of fixed points of the generator $\vec G = (\delta)$.  This conjecture is intuitively compelling, because applying any fpc $Y$ to $\delta$ leads to a slowdown of head reductions that seems impossible to remove:
\begin{alignat*}{3}
	Y x &\thra_w x (Y') &&=_\beta x (Yx) \\
	Y \delta x &\thra_w \delta (Y'[x:=\delta]) x
	\thra_w x(Y'[x:=\delta] x) &&=_\beta x (Y \delta x)
\end{alignat*}
Upon closer inspection, the central property of the generator $(\delta)$ that this reasoning depends on is that $\delta(Y\delta)x$ \emph{adds to} the reduction length needed for the B\"ohm tree to develop, while still using the given fpc $Y$ in constructing this B\"ohm tree infinitely often.
Since any conversion between the two will synchronize their B\"ohm reductions, no such conversion can be possible.

If this reasoning proves to be correct for $\delta$, it should remain valid for any generator possessing the same property.  This leads us to the following definition and conjecture.

Recall, for any wfpc $(Y_n)$, $Y_0 = \lambda x. x^k(Y_k x) = \delta^k(Y_k) = \delta^k(z)[z:=Y_k]$.

\begin{defi}
	A generator $\vec G$ is \emph{accretive} if, for each $k$, $z \in^\infty \delta^k(z)\vec G$.
\end{defi}

That is, $\vec G$ is accretive if it actually uses every level of the input fpc
in constructing the output fpc, so that replacing $Y$ with any approximant will
also cut the B\"ohm tree of $Y \vec G$.

\begin{conj}
\label{conj}
If $\vec G$ is accretive, then there exists no $Y\in\WFPC$ such that $Y = Y \vec G$.
\end{conj}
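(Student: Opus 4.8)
The plan is to argue by contradiction. Suppose $Y\in\WFPC$ satisfies $Y=Y\vec G$, and fix its level decomposition $(Y_n)$, so that $Y_0=\delta^k(Y_k)$ for every $k$. The first step is to record accretivity in a \emph{truncation} form. For a fixed unsolvable $\Omega$, the term $\delta^k(\Omega)$ has B\"ohm tree $x^k(\bot)$ --- the infinite head tower of $Y_0x$ cut at depth $k$. Substituting $z:=\Omega$ into $\delta^k(z)\vec G$ yields $\delta^k(\Omega)\vec G$; since accretivity means $z$ occurs at some finite depth in $\BT(\delta^k(z)\vec G)$ (whereas $z\notinfty \delta^k(z)\vec G$ would leave the output untouched), this substitution also cuts the output, say at minimal depth $d_k<\infty$. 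Because $M\mapsto M\vec G$ is monotone and continuous for the B\"ohm-tree order and $\sup_k \delta^k(\Omega)=Y_0$ in that order, the cut depths satisfy $d_k\le d_{k+1}$ and $d_k\to\infty$: accretivity is precisely the statement that every level of the input is used, so later truncations cut the output later.

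Next I would try to turn this static picture into the dynamical slowdown described just before the conjecture. The two readings of the fixed point, $Y_0=\delta^k(Y_k)$ and $Y_0=Y_0\vec G=(\delta^k(z)\vec G)[z:=Y_k]$, exhibit $Y_0x$ developing its tower in two ways: on the left each head step through $\delta$ contributes one spine node ``for free'', whereas on the right the residual $Y_k$ is genuinely consumed and, by the previous paragraph, only surfaces at depth $d_k$. Since $d_k\to\infty$, producing the $n$-th spine node through the right-hand development must route the computation through $\vec G$ and through an approximant $Y_k$ with $k$ growing without bound; that is, the right-hand process incurs an unbounded, monotonically increasing delay relative to the left-hand one. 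I would aim to make this precise by tracing the descendants of the occurrence of $z$ (equivalently of $Y_k$) along a head-reduction sequence developing $Y_0x=(Y_0\vec G)x$ toward its B\"ohm tree, and showing that each new spine node forces the consumption of a strictly deeper residual of the input.

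The main obstacle is to convert this delay into a genuine contradiction, and it is exactly here that the problem is open. The B\"ohm trees of $Y_0$ and $Y_0\vec G$ of course coincide --- both are the full tower --- and even the truncations are consistent (they merely force $d_k=k$), so the contradiction cannot be read off from B\"ohm trees alone; it must come from the finiteness of the conversion $Y_0=_\beta Y_0\vec G$. The delicate point is that the fixed-point hypothesis supplies a conversion only at the \emph{specific} instance $z:=Y_k$, namely $\delta^k(z)[z:=Y_k]=_\beta(\delta^k(z)\vec G)[z:=Y_k]$, and \emph{not} the generic open identity $\delta^k(z)=_\beta\delta^k(z)\vec G$; hence one may not simply substitute $\Omega$ to cut both sides and conclude. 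A complete proof must instead establish a \emph{synchronization lemma}: any $\beta$-conversion between $Y$ and $Y\vec G$ induces a tight, step-bounded correspondence between the head reductions of $Yx$ and $(Y\vec G)x$, strong enough that the unbounded accretive delay becomes impossible. This is precisely the step where Intrigila's argument for the case $\vec G=(\delta)$ contained the gap later identified by Endrullis, so any successful approach will need a robust replacement.

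Two devices seem worth deploying toward such a lemma. First, the equation can be amplified by iteration, $Y=Y\vec G^m$ for all $m$, so that $Y$ is a fixed point of the entire submonoid generated by $\vec G$; one expects each power $\vec G^m$ to remain accretive with cut depths that also grow in $m$, yielding a family of increasingly stringent constraints from a single hypothetical fixed point. Second, in the spirit of Proposition \ref{thetaz}, one might instrument the input with a parameter (as in $\Theta_{\bullet}$) so that the level actually consumed is \emph{recorded} in the reduct; injectivity of such parametrized fpcs would then let one detect the lag syntactically rather than merely dynamically. I expect a proof to combine a well-founded measure on an infinitary (B\"ohm) reduction calculus with such parameter tracing, but furnishing the synchronization lemma remains the crux.
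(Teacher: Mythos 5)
This statement is a \emph{conjecture}, not a theorem: the paper offers no proof of it, explicitly presents it as a strengthening of Statman's still-open conjecture (the case $\vec G = (\delta)$), and only establishes the complementary direction, namely that every non-accretive (i.e.\ weakly compact) generator \emph{does} have a fixed point (Proposition \ref{fix}). So there is no ``paper's own proof'' to compare against, and your proposal is not a proof either --- as you yourself say. Your first paragraph is sound and consistent with the paper's informal motivation: recasting accretivity as ``truncating the input at level $k$ truncates the output at depth $d_k$,'' with $d_k \to \infty$ by continuity of $M \mapsto M\vec G$ on B\"ohm trees, is a correct and reasonable formalization of the ``slowdown'' picture the paper sketches just before the conjecture. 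You also correctly identify why the naive argument fails: the hypothesis gives a conversion only at the instance $z := Y_k$, not the open identity $\delta^k(z) = \delta^k(z)\vec G$, so one cannot substitute $\Omega$ and compare truncations; and B\"ohm trees alone cannot distinguish $Y$ from $Y\vec G$ since both are wfpcs.

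The genuine gap is the entirety of what you call the synchronization lemma: converting the unbounded accretive delay into the nonexistence of a \emph{finite} $\beta$-conversion $Y = Y\vec G$. Nothing in your proposal supplies it, and your two suggested devices do not close it --- iterating to $Y = Y\vec G^{\,m}$ only multiplies instances of the same unproven obstruction, and parameter-tracing in the style of $\Theta_{\bullet}$ instruments terms you would have to \emph{choose}, whereas here $Y$ is an arbitrary hypothetical wfpc over which you have no control. This is exactly the step at which Intrigila's argument for $(\delta)$ broke down, and the paper's own machinery (the tracking variable $c$ in $\Upsilon$, the lifting of finite conversions to open truncations in Theorem \ref{thm}) works only because there the wfpc is a fixed, concretely constructed term with a deterministic reduction graph --- a luxury unavailable here. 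In short: correct diagnosis, honest framing, but no proof, because none is known.
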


\begin{rem}
	Conjecture \ref{conj} generalizes Statman's conjecture.
	Indeed,
	\[ \delta^k(z) \delta
	= (\lambda x. x^k(zx))\delta
	= \delta^k(z\delta) = \lambda x. x^k (z \delta x) \]
	Thus, $z \in^\infty \delta^k(z) \delta$ for all $k$, and the fgv $\vec G = (\delta)$ is accretive.
\end{rem}

Conjecture \ref{conj} is as sharp as possible: later in this section, we will show that every $\vec G$ which is \emph{not} accretive possesses a fixed point among the wfpcs.

The non-accretive generators can be naturally divided into several classes given below.  (We eschew the (W)FPC notation in the next definition to emphasize that there are actually four properties of generators that are being defined.)

From the earlier equality $Y_0 = \delta^k(Y_k)$, note that
for each (w)fpc $Y$, $k \ge 0$, we can write
\[Y \vec G  = \delta^k(Y')\vec G= G_0^k(Y' G_0) G_1 \cdots G_n.
\]
\newpage

\begin{defi} Let $\vec G$ be a wfgv.  Fix $z \# \vec G$.
\begin{itemize}
\item $\vec G$ is \emph{constant} if there is a $k$ such that $\,z \notin G_0^k(z) G_1 \cdots G_n$.
\item $\vec G$ is \emph{weakly constant} if there is a $k$ s.t.\ $z \notinfty G_0^k(z) G_1 \cdots G_n$.
\item $\vec G$ is \emph{compact} if there is a $k$ such that $\;G_0^k(z) G_1 \cdots G_n \in \FPC$.
\item $\vec G$ is \emph{weakly compact} if there is a $k$ s.t.\ $G_0^k(z) G_1 \cdots G_n \in \WFPC$.
\end{itemize}

The least $k$ satisfying one of these conditions is then called the \emph{modulus of constancy}, or \emph{modulus of compactness}, accordingly.  Note that $\vec G$ is accretive iff $\vec G$
is not weakly compact.
\end{defi}

From now on, let $\vec G$ be a possibly weak fgv.  We will omit freshness conditions $x \# Y$, $z \# \vec G$ etc., as they will always be obvious from the context.
\begin{prop}
Let $\vec G$ be a constant (w)fgv.  There is a term $Z$ such that
\[ Y \vec G = Z\]
for all wfpc $Y$.
Hence $Z$ is (w)fpc.
\end{prop}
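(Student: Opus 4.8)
The plan is to read off the constant value $Z$ directly from the constancy hypothesis and then to check that every wfpc input is carried to it by a single substitution. By definition of constancy there is a $k$ with $z \notin G_0^k(z) G_1 \cdots G_n$, and unfolding the relation $z \notin M$ this says precisely that there is a term $Z$ with $z \# Z$ and
\[ G_0^k(z) G_1 \cdots G_n = Z. \]
This $Z$, which visibly mentions no input fpc, is the witness we seek.

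Next I would take an arbitrary wfpc $Y = (Y_n)$ and use the identity $Y = Y_0 = \delta^k(Y_k)$ recalled above, together with the displayed equality stated just before the definition of constancy (taking $Y' = Y_k$), to rewrite
\[ Y \vec G = \delta^k(Y_k) \vec G = G_0^k(Y_k G_0) G_1 \cdots G_n. \]
The key point is that this right-hand side is exactly the instance of $G_0^k(z) G_1 \cdots G_n$ under the substitution $z := Y_k G_0$: the freshness convention supplies $z \# \vec G$, so $z$ occurs in $G_0^k(z) G_1 \cdots G_n$ only in the indicated hole, whence
\[ \bigl(G_0^k(z) G_1 \cdots G_n\bigr)[z := Y_k G_0] = G_0^k(Y_k G_0) G_1 \cdots G_n. \]

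Applying the substitution $[z := Y_k G_0]$ to both sides of the conversion $G_0^k(z) G_1 \cdots G_n = Z$ then finishes the argument. Since beta conversion is stable under substitution the equation survives; its left-hand side becomes $G_0^k(Y_k G_0) G_1 \cdots G_n$ by the identity just displayed, while its right-hand side becomes $Z[z := Y_k G_0] = Z$ because $z \# Z$. Chaining the equalities yields $Y \vec G = Z$, and since $Y$ was arbitrary this holds for every wfpc uniformly. Finally, because $\vec G$ is a (w)fgv, $Y \vec G$ is (w)fpc whenever $Y$ is, e.g.\ for $Y = \Theta$; hence the constant value $Z$ is itself (w)fpc.

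The argument is mostly bookkeeping, and I expect the only delicate point to be the central substitution step: justifying both that substituting into a conversion preserves it and that the freshness hypothesis $z \# \vec G$ really confines $z$ to the intended hole. These facts are standard, but they are precisely what converts the static constancy hypothesis into a single identity valid for all inputs at once.
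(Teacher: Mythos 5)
Your proposal is correct and follows essentially the same route as the paper: extract the witness $Z$ from the constancy hypothesis, unfold the wfpc to $G_0^k(Y_k G_0)G_1\cdots G_n$ (the paper does this step by step, you invoke the already-recorded identity $Y_0 = \delta^k(Y_k)$, which amounts to the same computation), and conclude by substituting $z := Y_k G_0$ into the conversion $G_0^k(z)G_1\cdots G_n = Z$, using $z \# Z$ to keep the right-hand side fixed. The extra care you take over stability of conversion under substitution and the confinement of $z$ is implicit but present in the paper's argument.
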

\begin{proof}
Let $\vec G$ be constant, and let $k$ be such that
$z \notin G_0^k(z) G_1 \cdots G_n$.

That is, $z \notin \fv(Z)$ for some $Z \in \Lambda$
convertible to $G_0^k(z) G_1 \cdots G_n$.

Then for any wfpc $Y = (Y_0,Y_1,\dots)$, we have
\begin{align*}
Y \vec G = Y_0 \vec G &= Y_0 G_0 \cdots G_n\\
&= G_0 (Y_1 G_0) G_1 \cdots G_n\\
&= \qquad \vdots\\
&= G_0^k(Y_k G_0) G_1 \cdots G_n\\
&= G_0^k(z) G_1 \cdots G_n[z:=Y_k G_0]\\
&= Z [z:=Y_k G_0]\\
&= Z \hspace{3.5cm} \qedhere
\end{align*}
\end{proof}

\begin{prop}
The following observations are immediate.
\begin{enumerate}
	\item Every constant fgv is compact.
	\item Every constant (w)fgv is weakly constant.
	\item Every compact (w)fgv is weakly compact.
\end{enumerate}
\end{prop}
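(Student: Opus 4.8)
The plan is to read off all three inclusions directly from the definitions, relying on two standard facts: that $\beta$-conversion is contained in B\"ohm-tree equality (so $M = N$ implies $M \bte N$), and that $\FPC \subseteq \WFPC$, with both classes closed under $\beta$-conversion.

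For (2) I would simply reuse the witness. If $\vec G$ is constant with modulus $k$, there is a term $N$ with $N = G_0^k(z) G_1 \cdots G_n$ and $z \# N$. Since $\beta$-convertible terms have the same B\"ohm tree, $N \bte G_0^k(z) G_1 \cdots G_n$, so the same $N$ witnesses $z \notinfty G_0^k(z) G_1 \cdots G_n$; hence $\vec G$ is weakly constant at the same modulus. For (3) I would invoke $\FPC \subseteq \WFPC$: if $G_0^k(z) G_1 \cdots G_n \in \FPC$ for some $k$, then that very term lies in $\WFPC$, so the same $k$ witnesses weak compactness.

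Claim (1) is the only one needing genuine attention, and there I would call on the preceding proposition on constant generators. If $\vec G$ is constant, that proposition produces a fixed term $Z$ with $Y \vec G = Z$ for every wfpc $Y$, and its proof exhibits $G_0^k(z) G_1 \cdots G_n = Z$ at the modulus $k$ of constancy, with $z \# Z$. Because here $\vec G$ is a (strong) fgv, substituting a genuine fpc such as $\Theta$ for $Y$ gives $Z = \Theta \vec G \in \FPC$. As $\FPC$ is closed under $\beta$-conversion, $G_0^k(z) G_1 \cdots G_n = Z \in \FPC$, which is precisely compactness at modulus $k$. The only subtlety to watch, and the reason (1) is not wholly automatic, is that it uses the hypothesis that $\vec G$ is an fpc generator rather than merely a wfpc generator: this is exactly what upgrades the constant value $Z$ from a wfpc to an fpc, and thereby yields \emph{compactness} rather than just weak compactness.
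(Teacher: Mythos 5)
Your proof is correct and supplies exactly the reasoning the paper leaves implicit (the paper gives no proof, declaring these observations immediate): items (2) and (3) follow from the inclusion of $\beta$-conversion in B\"ohm-tree equality and from $\FPC \subseteq \WFPC$, while item (1) correctly combines the preceding proposition on constant generators with the fgv hypothesis to conclude $G_0^k(z)G_1\cdots G_n = Z = \Theta\vec G \in \FPC$. Your closing remark that (1) genuinely needs $\vec G$ to be a strong fgv is also accurate, which is precisely why the paper states (1) for fgvs only.
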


\begin{prop}  
	$\vec G$ is weakly constant iff $\vec G$ is weakly compact.
\end{prop}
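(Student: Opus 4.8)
The plan is to work throughout with the single term
$H_k := G_0^k(z)\,G_1 \cdots G_n$, where $z \# \vec G$, in terms of which the two properties read: $\vec G$ is weakly constant iff $z \notinfty H_k$ for some $k$, and $\vec G$ is weakly compact iff $H_k \in \WFPC$ for some $k$. I will show that in fact each property holds at the \emph{same} modulus $k$, so the equivalence is witness-preserving. The argument rests on two facts recalled in the preliminaries. First, all (w)fpcs share one Böhm tree, so $W \in \WFPC$ iff $W \bte \Theta$; and since $\Theta$ is a \emph{closed} fpc, $z \# \Theta$. Second, Böhm-tree equivalence $\bte$ is \emph{substitutive}: $M \bte M'$ implies $M[z:=P] \bte M'[z:=P]$.

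The forward implication (weakly compact $\then$ weakly constant) uses nothing about $\vec G$ being a generator. If $H_k \in \WFPC$, then $H_k \bte \Theta$, and as $z \# \Theta$ the defining condition $z \notinfty H_k$ holds at once. So $\vec G$ is weakly constant with the same modulus.

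For the converse (weakly constant $\then$ weakly compact) I would first record the relevant instance of the displayed identity preceding the definition, taken at the closed fpc $\Theta$: since $\Theta = \delta^k(\Theta)$ and $\delta^k(z)x = x^k(zx)$, we get $\Theta \vec G = G_0^k(\Theta G_0)\,G_1 \cdots G_n = H_k[z := \Theta G_0]$; and because $\vec G$ is a wfgv, this is a wfpc, so $H_k[z:=\Theta G_0] \bte \Theta$. Now assume $z \notinfty H_k$ and choose $N \bte H_k$ with $z \# N$. Substitutivity of $\bte$ gives $H_k[z:=\Theta G_0] \bte N[z:=\Theta G_0] = N \bte H_k$, the middle step being an \emph{equality} because $z \# N$. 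Chaining these, $H_k \bte H_k[z:=\Theta G_0] \bte \Theta$, hence $H_k \in \WFPC$, so $\vec G$ is weakly compact, again at modulus $k$. The single idea driving this direction is that when $z$ does not occur in the Böhm tree of $H_k$, plugging anything into $z$ leaves that Böhm tree unchanged; plugging in the core of $\Theta$ turns $H_k$ into the wfpc $\Theta \vec G$, which forces $H_k$ to have been a wfpc already.

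The only step requiring genuine care is the substitutivity of $\bte$. From the coinductive presentation of $\bte$ in the Notation one verifies that substituting a fixed $P$ for $z$ preserves both rules: it preserves the unsolvable–unsolvable axiom, since substitution for a free variable cannot solve an unsolvable term (the head redexes of a term without head normal form survive the substitution, so its head reduction still diverges); and it commutes with the head-normal-form rule, grafting the same tree of $P$ on both sides when $z$ is the head variable, and otherwise leaving the head $\lambda \vec x. y\,\vec M$ intact while recursing on the $\bte$-related arguments. I expect this preservation check — rather than the short $\bte$-chains of the main argument — to be the point needing the most attention; with it in hand, both inclusions follow in a few lines.
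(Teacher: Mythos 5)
Your proposal is correct and follows essentially the same route as the paper: both directions hinge on $\Theta\vec G = G_0^k(\Theta G_0)G_1\cdots G_n$ being a wfpc, on all wfpcs sharing the (closed) B\"ohm tree of $\Theta$, and on the fact that $z\notinfty M$ lets one substitute for $z$ without changing the B\"ohm tree. The only difference is that you make explicit the substitutivity of $\bte$ that the paper's one-line chain $H_k =^\infty H_k[z:=\Theta G_0]$ leaves implicit; that is a reasonable point to flag but not a new argument.
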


\begin{proof}  Let $\vec G$ be a wfgv.  Then
	\begin{align*}
		z \notinfty G_0^k(z) G_1 \cdots G_n
		&\then G_0^k(z) G_1 \cdots G_n =^\infty
		G_0^k(\Theta G_0) G_1 \cdots G_n =
		\Theta \vec G \in \WFPC,\\
		G_0^k(z) G_1 \cdots G_n \in \WFPC
		&\then G_0^k(z) G_1 \cdots G_n =^\infty \Theta
		\then z \notinfty G_0^k(z) G_1 \cdots G_n. \qedhere
	\end{align*}
\end{proof}

\begin{prop}  Every (weakly) compact generator has a fixed point:
\label{fix}
\begin{enumerate}
	\item If $\vec G$ is compact fgv, there exists fpc $Y$ with $Y \vec G = Y$. \label{fix1}
	\item If $\vec G$ is weakly compact wfgv, there exists wfpc $Y$ with $Y \vec G = Y$. \label{fix2}
\end{enumerate}
\end{prop}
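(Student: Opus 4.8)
The plan is to produce the fixed point by a fixed-point-combinator construction, exploiting that being a (w)fpc is preserved under substitution. Throughout write $H(z) = G_0^k(z) G_1 \cdots G_n$, where $k$ is the modulus of (weak) compactness, so that $H(z) \in \FPC$ (resp.\ $H(z) \in \WFPC$). The crucial elementary observation is that if $H(z) \in \FPC$, then $H(N) \in \FPC$ for every term $N$: substituting $z := N$ into the conversion $H(z) x = x(H(z) x)$ (with $x \# N$) yields $H(N) x = x(H(N) x)$. The same substitution argument, applied instead to the $\bte$-characterization of wfpcs, shows that $H(N) \in \WFPC$ whenever $H(z) \in \WFPC$.

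For part \ref{fix1}, I would set $F = \lambda w. H(w G_0)$ and $Y = \Theta F$, so that $Y = F Y = H(Y G_0)$. By the observation, $Y = H(Y G_0)$ is an fpc. Being an fpc, $Y$ unfolds \emph{exactly}: $Y G_0 = G_0(Y G_0) = \cdots = G_0^k(Y G_0)$, and therefore $Y \vec G = Y G_0 G_1 \cdots G_n = G_0^k(Y G_0) G_1 \cdots G_n = H(Y G_0) = Y$. This settles the compact case.

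The weakly compact case is more delicate, because a mere wfpc unfolds only up to $\bte$; the naive choice $Y = \Theta F$ gives only $Y \vec G \bte Y$, not $Y \vec G = Y$. To recover genuine convertibility I would use the sequence presentation of wfpcs recalled above (the sequence $(Y_n)$ with $Y_n x = x(Y_{n+1} x)$, \cite[Prop.\,3.9]{MPSS}), whose level equations are exact beta-conversions. The key identity is that for any wfpc with sequence $(Y_n)$ one has, by applying the level equations $k$ times with $x := G_0$, that $Y \vec G = G_0^k(Y_k G_0) G_1 \cdots G_n = H(Y_k G_0)$; so it suffices to build a wfpc whose level-$k$ component satisfies $Y_0 = H(Y_k G_0)$. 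I would do this directly: apply the sequence presentation to the wfpc $H(z)$ itself to obtain terms $(H_j)_{j \ge 0}$ with $H_0 = H(z)$ and $H_j x = x(H_{j+1} x)$; since these are beta-conversions in the free variable $z$, they survive substitution, so $(H_j[z := N])_j$ presents $H(N)$ for every $N$. Now solve $W = H_k[z := W G_0]$ by setting $W = \Theta(\lambda w. H_k[z := w G_0])$, and define $Y_j = H_j[z := W G_0]$, with $Y = Y_0 = H(W G_0)$. The substituted level equations give $Y_j x = x(Y_{j+1} x)$ for all $j$, so $(Y_j)$ presents a wfpc; and the choice of $W$ makes the sequence close up, $Y_k = H_k[z := W G_0] = W$. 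Hence $Y \vec G = H(Y_k G_0) = H(W G_0) = Y_0 = Y$. In the compact case $H(N)$ is itself an fpc, so all $H_j = H_0$ and this construction collapses to the one above, recovering part \ref{fix1} as a special case.

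The step I expect to require the most care is the last one: verifying that the level relations of $H(z)$ are honest beta-conversions that remain valid after the substitution $z := W G_0$, so that the infinite sequence $(Y_j)$ really closes up at level $k$ and presents a single wfpc. This must be carefully distinguished from the impossibility, established above, of uniformly ``Böhming out'' an inner level of a wfpc: here I do not build a level-extraction operator valid for all wfpcs at once, but merely invoke the sequence presentation for the \emph{one} specific wfpc $H(z)$, carrying its free variable $z$ along, which is exactly what licenses the substitution.
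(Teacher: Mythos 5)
Your proposal is correct and follows essentially the same route as the paper: the paper sets $F_0[z] := G_0^k(z)G_1\cdots G_n$, unfolds it through the level terms $F_j[z]$ with $F_0[z]x = x^j(F_j[z]x)$, takes $Y := \Theta(\lambda y.\,F_k[yG_0])$, and exhibits $X := F_0[YG_0]$ as the fixed point --- precisely your $W$ and your $Y = H(WG_0)$, verified by the same computation. The only (harmless) difference is one of presentation: you settle the compact case first by a slightly shorter argument (using that the fpc $Y = H(YG_0)$ unfolds exactly) and then handle the weak case, whereas the paper proves the weak case and specializes it to the compact one.
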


\begin{proof}
The construction is the same for both claims.  We will first treat the weak case, and then specialize the proof to the first claim as well.

Let $\vec G$ be a weakly compact wfgv.
Let $k$ be the modulus of weak compactness,
so that $G^k_0(z)G_1 \cdots G_n \in \WFPC$.
Put $F_0[z] := G^k_0(z) G_1 \cdots G_n$.  Since $F_0[z]$ is wfpc, write
\begin{equation}
 F_0[z] x = x(F_1[z] x) =x^2(F_2[z]x) = \cdots = x^k(F_k[z]x) = \cdots
 \label{e:fk}
\end{equation}
Define $Y:=\Theta (\lambda y. F_k[y G_0]) = F_k [Y G_0]$, and $X:=F_0[Y G_0]$.
Now compute
\begin{align*}
	X \vec G
	&= XG_0 \cdots G_n\\
	&= F_0[Y G_0] G_0 \cdots G_n
	& \text{by definition of $X$}\\
	&= G_0^k(F_k[Y G_0] G_0) G_1 \cdots G_n
    & \text{by \eqref{e:fk} with\ \ } [x := G_0, z:= Y G_0]\\
    &= G_0^k(Y G_0) G_1 \cdots G_n
    & \text{by definition of $Y$}\\
    &= F_0[Y G_0]
    & \text{by definition of $F_0$}\!\\
    &= X
    & \text{by definition of $X$}
\end{align*}
Since $F_0[z]$ is wfpc, so is $X = F_0[Y G_0]$, proving the second claim.

Now suppose that $\vec G$ was actually compact fgv.  Then $F_0[z]$ would be fpc, while all of the steps above would remain valid, with $F_0[z] = F_k[z]$ for each $k$.
Then $X = F_0[Y G_0]$ would be fpc as well, proving the first claim.
\end{proof}

\begin{rem}
The reader will recognize the converse to Proposition \ref{fix}(\ref{fix2}) as the contrapositive of Conjecture \ref{conj}.
 The converse to Proposition \ref{fix}(\ref{fix1}) seems plausible, but we do
 not have sufficient evidence to assert it as a formal conjecture.
\end{rem}
%
%

\section{Rectifying generators}

\begin{defi}
	A vector $\vec G$ is \emph{rectifying} if it satisfies condition $(iv)$ of Proposition \ref{conds}:
	\[ Y \in \WFPC \then Y \vec G \in \FPC \]
\end{defi}

\begin{exa}
	$\vec G = (\lambda y. \Theta_y)$ is rectifying:
	\[ (Y_n) \in \mathsf{WFPC} \then Y_0 (\lambda y. \Theta_y)
		= (\lambda y. \Theta_y) (Y_1 (\lambda y. \Theta_y))
		= \Theta_{Y_1 (\lambda y. \Theta_y)} \in \mathsf{FPC} \]
\end{exa}

In Example \ref{exs}, we saw that $(\lambda y. \Theta_y)$ has a fpc fixed point.  We shall presently see that so does every rectifying fgv.

Our original proof of this fact first showed that if $\vec G$ is rectifying, then $\vec G$ is weakly constant, and thus has a wfpc fixed point $Y$.  But then $Y = Y \vec G \in \FPC$ because $\vec G$ is rectifying, hence $Y$ is fpc.

Considering that compactness provides another sufficient condition for existence of fpc fixed points, it was natural to wonder whether rectifying and compact fgvs are related.  This led us to the following result.

\renewcommand{\then}{\quad \Longrightarrow \quad}
\begin{thm} \label{thm}
	A fgv $\vec G$ is compact iff it is rectifying.
\end{thm}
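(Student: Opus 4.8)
The plan is to prove the two implications separately. The forward direction, \emph{compact} $\Rightarrow$ \emph{rectifying}, is essentially a substitution argument, while the converse carries all the weight and is where the term $\Ups$ of Examples \ref{ex1} enters.

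For compact $\Rightarrow$ rectifying, let $k$ be the modulus of compactness and set $F[z] := G_0^k(z) G_1 \cdots G_n \in \FPC$. Given any wfpc $Y = (Y_n)$, I would use the identity $Y_0 = \delta^k(Y_k)$ recorded above to compute
\[ Y \vec G = Y_0 G_0 \cdots G_n = G_0^k(Y_k G_0) G_1 \cdots G_n = F[Y_k G_0]. \]
The defining conversion $F[z]\,x = x(F[z]\,x)$, valid for $x \# F[z]$, is preserved under the substitution $z := Y_k G_0$, since beta-conversion is stable under substitution and $x$ stays fresh; hence $F[Y_k G_0] = Y\vec G$ is again an fpc. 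Thus $Y \in \WFPC \Rightarrow Y\vec G \in \FPC$, i.e.\ $\vec G$ rectifies.

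For rectifying $\Rightarrow$ compact, the idea is to feed $\vec G$ the specific wfpc $\Ups$, whose reduction graph is simple and whose approximants are tagged by the fresh variable $c$. A direct computation shows that the $k$-th approximant satisfies $\Ups_k x \thra x(\Ups_{k+1} x)$ while carrying the marker $c^k\ic$, so that $\Ups_k\, G_0$ records the level $k$ through $c$. Feeding then gives, for every $k$, the representation
\[ \Ups\, \vec G = G_0^k(\Ups_k\, G_0)\, G_1 \cdots G_n . \]
By the rectifying hypothesis $\Ups\,\vec G \in \FPC$, so $\Ups\,\vec G$ has the closed B\"ohm tree of an fpc; in particular $c \notinfty \Ups\,\vec G$, meaning the marker variable is consumed in the limit even though it appears in each finite representation above. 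As a byproduct this re-establishes that a rectifying generator is weakly constant, recovering the route through the wfpc fixed point sketched in the discussion before the theorem.

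The crux, and the main obstacle, is to \emph{localize} this consumption. Exploiting the regularity of $\Ups$'s reduction graph, I would identify the least level $k$ at which the marker $c$ is erased from $\Ups\,\vec G$, and argue that at exactly this level the \emph{generic} term $G_0^k(z) G_1 \cdots G_n$ is already an fpc, not merely a wfpc: once the input approximant stops contributing to the B\"ohm tree, replacing $\Ups_k\, G_0$ by a fresh $z$ can no longer disturb the exact fixed-point identity. This yields compactness with modulus $k$. The delicate point is precisely the transfer from ``the parametrized output $\Ups\,\vec G$ is an exact fpc'' to ``the term with a free generic argument $G_0^k(z) G_1 \cdots G_n$ is an exact fpc,'' since being an fpc is a finitary, exact-conversion property rather than a B\"ohm-tree property; it is the very simple reduction behavior of $\Ups$ that makes this localization tractable.
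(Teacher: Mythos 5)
Your forward direction is correct and is the paper's argument: writing $Y_0 = \delta^k(Y_k)$ gives $Y\vec G = G_0^k(Y_kG_0)G_1\cdots G_n = \bigl(G_0^k(z)G_1\cdots G_n\bigr)[z:=Y_kG_0]$, and the exact fpc conversion for $G_0^k(z)G_1\cdots G_n$ survives the substitution. You have also chosen the right instrument for the converse, namely the wfpc $\Ups$ with its deterministic, $c$-labelled reduction graph.

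The converse itself, however, has a genuine gap, and it sits exactly at the step you flag as delicate. You propose to find ``the least level $k$ at which the marker $c$ is erased from $\Ups\,\vec G$'' and to conclude that once the input approximant ``stops contributing to the B\"ohm tree,'' replacing it by a fresh $z$ cannot disturb the exact fixed-point identity. First, such a level need not exist as a finite reduction stage: for $\vec G = (\lambda y.\Theta_y)$ the variable $c$ occurs in every finite reduct of $\Ups\,\vec G$ and vanishes only in the B\"ohm-tree limit. Second, and more seriously, ``no longer contributes to the B\"ohm tree'' yields only $z\notinfty G_0^k(z)G_1\cdots G_n$, i.e.\ weak constancy, hence weak compactness ($G_0^k(z)G_1\cdots G_n\in\WFPC$) --- and the paper's example $(P,R)$ is a weakly constant fgv, whose input stops contributing to the B\"ohm tree already at level $1$, that is \emph{not} compact. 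So the passage from a B\"ohm-tree fact to the exact conversion is precisely the implication that can fail, and your sketch supplies no mechanism for it. The paper's mechanism is different in kind: by Church--Rosser the rectifying hypothesis gives a \emph{finite} common reduct $\Ups\,\vec G\,x \thra X \thla x(\Ups\,\vec G\,x)$; using the deterministic reduction graph of $\Ups$ and the fact that every occurrence of $c$ in any reduct is localized inside some descendant $\Ups^{k_i}_{M_i}$, both reductions are prolonged (via a factoring of reductions through contexts, Barendregt's Lemma) until all descendants of $\Ups$ are synchronized at one stage $k$ and displayed as $\Ups^k_N$; the entire two-sided reduction then lifts along the substitution $[u\,x := \Ups^k_x]$ to show $\delta^k(u)\vec G\in\FPC$, and $[u:=\kc u]$ finishes. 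The modulus $k$ is extracted from the finite conversion, not from the B\"ohm tree, and this reduction-tracing and lifting is the content your proposal is missing. Relatedly, $c\notinfty \Ups\,\vec G$ holds for \emph{every} wfgv, so it does not by itself re-establish weak constancy without the same localization you are trying to prove.
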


\begin{proof}
  \begin{description}
    \item[$(\RA)$]
	Suppose $G^k_0(z)G_1 \cdots G_n \in \FPC$.
	Then \begin{align*}
	     	&Y = (Y_0,Y_1,\dots) \in \WFPC \\
      \then &Y \vec G = Y_0 G_0 \cdots G_n
            = G_0^k(Y_k G_0) G_1 \cdots G_n
            = G_0^k(z) G_1 \cdots G_n [z:=Y_k G_0] \in \FPC.
	     \end{align*}
       \item[$(\LA)$]
    The intuition for this direction is that, although the B\"ohm tree of a wfpc $Y$ is infinite, only a finite part of it can be used in any conversion $\rho : Y \vec G x = x (Y \vec G x)$.  Thus, writing $Y x = x^k (Y_k x) = \delta^k(Y_k)x$ for large enough $k$ will ensure that $Y_k$ is not touched by any redex contractions.  Then the whole conversion $\rho$ could be lifted
    to $\rho = \sigma[z := Y_k]$, where \[\sigma : \delta^k(z) \vec G x = x (\delta^k(z) \vec G x).\]
    To formalize this intuition, suppose $\vec G$ is rectifying.
    Fix $c \# \vec G$.  Recall the wfpc $\Ups$ from Examples \ref{exs}:
    \begin{align*}
    	W_{x,p} &=\lam v. x (v (c p) v)\\
    	V_x &= \lam p. W_{x,p}\\
    	\Ups &= \lam x. V_x \ic V_x
    \end{align*}
    That is, $V_x = \lam p v. x (v (c p) v)$.  Note that $W_{x,p}$ and $V_x$ are normal forms.
    Let $\Ups^k_x = V_x c^k(\ic) V_x$.  The term $\Ups x$ reduces as follows:
    \begin{align*}
    	\Ups x \to \Ups^0_x \equiv V_x \ic V_x
    	&\to W_{x,\ic} V_x
    	\to x (V_x (c \ic) V_x)
    	\equiv x (\Ups^1_x)\\
    	&\to x(W_{x,c\ic} V_x)
    	\to x (x (V_x c^2(\ic) V_x))
    	\equiv x^2 (\Ups^2_x)\\
    	&\to \cdots\\
    	&\to x^k (\Ups^k_x)\\
    	&\to \cdots
    \end{align*}
    Since each term appearing in the above reduction sequence has a unique redex, the reduction is completely deterministic.
    That is --- the above sequence actually comprises the entire reduction graph of $\Ups^0_x$.
    The sequence also shows that $\Ups$ is a wfpc.
    It is not a fpc however, since $\Ups^0_x$ obviously has no reducts in common with $\Ups^1_x$.
    But $\vec G$ is rectifying, so $\Ups \vec G$ is fpc.
    By the Church--Rosser theorem, let $X$ be a common reduct
    \begin{equation}
 \Ups \vec G x \thra X \thla x (\Ups \vec G x).
 \label{CR}
    \end{equation}
    We will use these reductions to show that $\delta^k(z) \vec G \in \FPC$ for large enough $k$.

		The main idea behind the construction is as follows.
		Any finite reduction $\Ups M \thra X$ can be continued until all the descendants of $\Ups$ project the same number of steps from $\Ups^0_M$ in the sequence above.  Afterwards, all descendants of $M$ can be further synchronized by confluence.

		For example, if $M = \lambda y. [y \kc , y]$, then
		\begin{align*}
			\Ups (\ic M) \thra (\ic M) (\Ups^1_{\ic M})
			&\thra (\lambda y. [y \kc, y]) (V_{\ic M} c^1(\ic) V_{\ic M})\\
			&\thra [V_{\ic M} c^1(\ic) V_{\ic M} \kc ,
			V_{\ic M} c^1(\ic) V_{\ic M} ]\\
			&\thra [W_{\ic M, c^1(\ic)} V_{\ic M} \kc,
			\ic M (V_{\ic M} c^2(\ic) V_M)]
		\end{align*}
		and this reduction can be further continued to
		\begin{align*} [W_{\ic M, c^1(\ic)} V_{\ic M} \kc,
		\ic M (V_{\ic M} c^2(\ic) V_M)]
		&\thra [M(V_{M} c^2(\ic) V_M) \kc , M(V_{M} c^2(\ic) V_M )]\\
		&\; \equiv\; [M(\Ups^2_M),M(\Ups^2_M)].
	\end{align*}
    We proceed with the following sequence of claims, which are hopefully sufficiently clear not to warrant additional elaboration.
    \begin{enumerate}
        \item
				If $\Ups M \thra X$, then $X \thra X' \equiv C[V_{M_1}c^{k_1}(\ic)V_{M'_1},\cdots,V_{M_k}c^{k_m}(\ic)V_{M'_m}]$, with $M \thra M_i$, $M \thra M'_i$, and every occurrence of $c$ in $X'$ being displayed in the subterm $c^{k_i}(\ic)$ in one of the holes in $C[]$.
				\item If $\Ups M \thra X$, then $X \thra X' \equiv C[\Ups^{k_1}_{M_1},\dots,\Ups^{k_m}_{M_m}]$, with $M \thra M_i$ and every occurrence of $c$ being uniquely determined by its occurrence in some $\Ups^{k_i}_{M_i}$.
        This is obtained from above by finding a common reduct for each $M_i$,$M_i'$.
				\label{p2}
    	\item If $C[\Ups M] \thra X$, then $X \thra C'[\Ups^{k_1}_{M_1},\cdots,\Ups^{k_m}_{M_m}]$, with the same conditions on $M_i$ and occurrences of $c$ as before.
			This is obtained from the previous point by factoring the reduction into a part that does not depend on $\Ups M$,
			$C[x] \thra D[x \vec P_1,\dots, x \vec P_k]$, followed by reductions inside $D[\cdots]$ which are treated separately via \ref{p2}.  (See ``Barendregt's Lemma'' in \cite[Exercise 15.4.8]{B84})
    	\item If $C[\Ups M] \thra X$, then $X \thra C'[\Ups^k_{N}, \cdots, \Ups^k_N]$, where $M \thra N$ and each occurrence of $c$ being uniquely determined by its occurrence in some $\Ups^k_N$.
    	This is obtained from the previous claim by ``bumping all $\Ups^{k_i}$s along'' to stage $k \ge \max \setof{k_i}$, and letting $N$ be a common reduct of all the $M_i$s.
    	\item If the reduction $\rho : C[\Ups M] \thra C'[\Ups^k_N,\cdots,\Ups^k_N]$ is obtained by the algorithm given in the previous steps, then $\rho$ lifts to the instance
    	$\rho = \sigma[ux:=\Ups^k_x]$, where
    	\[ \sigma : C[\delta^k(u) M] \thra C'[u N,\cdots,u N]. \]
    \end{enumerate}
    And now we are done!  The common reductions in \eqref{CR} can be both continued to
    \[ \Ups \vec G x \thra C'[\Ups^k_N,\cdots,\Ups^k_N]
    \thla x (\Ups \vec G x)\]
    so that all of the descendants of $\Ups$ (under \emph{both} reductions) are displayed in the context.  (This follows from the fact that every occurrence of $c$ is witnessed in some $\Ups^k_N$, and $c$ was chosen to be fresh.  The variable $c$ acts as a ``label'' for the unfolding depth of $\Ups$.)

    The conclusion of the last step therefore holds for both of these reductions, and so
    \[ C[\delta^k(u)M] := [\delta^k(u)G_0]G_1\cdots G_n
    	\thra C'[uN,\dots,uN] \thla x(\delta^k(u)G_0\cdots G_n).
    \]
    That is, $G_0^k(uG_0)G_1\cdots G_n \in \FPC$.
		Applying the substitution $[u:=\kc u]$, we find that
		 $G_0^k(u)G_1\cdots G_n \in \FPC$ as well.\qedhere
   \end{description}
 \end{proof}

\begin{cor}
	Every rectifying fgv has a fixed point in $\FPC$.
\end{cor}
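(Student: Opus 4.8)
The plan is to obtain the conclusion as an immediate consequence of the main result just proved, namely Theorem \ref{thm}, together with the existence result for compact generators in Proposition \ref{fix}. Given a rectifying fgv $\vec G$, the first step is to invoke the $(\LA)$ direction of Theorem \ref{thm} to conclude that $\vec G$ is in fact compact. Once compactness is in hand, the second step is to apply Proposition \ref{fix}(\ref{fix1}) directly, which furnishes an $Y \in \FPC$ with $Y \vec G = Y$. There is essentially no obstacle here: the two ingredients are designed to compose, and the whole argument is a two-line chaining of earlier statements.

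I would also note, as a sanity check, that this matches the alternative route sketched in the discussion preceding Theorem \ref{thm}: a rectifying $\vec G$ is weakly constant (equivalently weakly compact, by the proposition identifying these), hence by Proposition \ref{fix}(\ref{fix2}) admits a \emph{wfpc} fixed point $Y = Y \vec G$; but since $\vec G$ is rectifying, $Y \vec G \in \FPC$, so $Y$ is already an fpc. Both routes land on the same conclusion, which reassures me that the clean statement ``compact $\iff$ rectifying'' is doing exactly the work intended. I would present only the first route, since it is the shortest now that Theorem \ref{thm} is available.

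Concretely, I expect the proof to read: \emph{Let $\vec G$ be a rectifying fgv. By Theorem \ref{thm}, $\vec G$ is compact. By Proposition \ref{fix}(\ref{fix1}), there exists $Y \in \FPC$ with $Y \vec G = Y$.} The only thing to be careful about is that the hypothesis of Proposition \ref{fix}(\ref{fix1}) requires $\vec G$ to be a (strong) fgv and compact in the strong ($\FPC$) sense, which is precisely what the $(\LA)$ direction of Theorem \ref{thm} delivers, so the specialization to fpcs goes through without any weakening. Since both cited results are already established, nothing further needs to be verified.
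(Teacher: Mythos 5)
Your proposal is correct and is exactly the argument the paper intends: the corollary is stated immediately after Theorem \ref{thm} precisely so that it follows by combining the $(\LA)$ direction (rectifying $\then$ compact) with Proposition \ref{fix}(\ref{fix1}), and your alternative route via weak compactness is the paper's own ``original proof'' sketched just before the theorem. Nothing further is needed.
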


\begin{rem}
	The proof of the nontrivial direction of Theorem \ref{thm} suggests a deeper connection between uniform properties (finite conversions) and terms obeying a coinductive pattern (such as wfpcs).  In the next section, we will see a different application of the same type of argument.  There seems to be a more general ``continuity principle'' at work here that could be worthwhile to isolate.
\end{rem}

We finish this section with an example of a weakly constant fgv which is not rectifying.  It follows that, even restricting to fgvs, compactness is indeed stronger than weak compactness.

This gives the full picture of the relationships between various classes of wfgvs we defined here.  These relationships are summarized in Figure \ref{fig}.

\begin{prop}
There exist weakly constant
fpc generators which are not rectifying.
\end{prop}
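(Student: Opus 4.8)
The plan is to exhibit a single concrete generator $\vec G = (G_0)$, built in the style of the wfpc $\Ups$ of Examples~\ref{exs} so that its reduction graph is a transparent deterministic spine. The guiding idea is that $G_0$ should thread its input through a ``counter'' that is invisible in the Böhm tree but visible under $\beta$-conversion, and that this counter collapses precisely when the input reproduces itself, i.e.\ when the input is a genuine fpc. Thus I want $Y G_0$ to be a Turing-style fixed point in which, for an fpc $Y$, the self-reproduction $Y M = M(Y M)$ forces the threaded counter to reset (exactly as the reset $y(\kc[y,x]) = [y,x]$ operates in the last fgv of Examples~\ref{exs}), whereas for a wfpc the counter keeps accumulating because only the \emph{Böhm} identity $Y M \bte M(Y M)$ is available.

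First I would write $G_0$ down explicitly and record, in the manner of the reduction of $\Ups x$, the single-spine reduction graph of $G_0^k(z)\,x$, displaying at depth $m$ a head $x^m$ followed by a residual carrying the counter at a stage depending on $k$ and $m$. Two facts should then be immediate. On the one hand the counter, and with it the buried occurrence of $z$, is pushed to ever greater depth, so it never surfaces in the Böhm tree; hence $z \notinfty G_0^k(z)$ for the relevant $k$, equivalently $G_0^k(z) \in \WFPC$, so $\vec G$ is weakly compact and therefore weakly constant. On the other hand, feeding a genuine fpc $Y$, the equalities $Y_k G_0 = Y G_0$ make the threaded value constant across stages, the counter resets by the fpc reproduction law, and the computation closes up to $Y G_0\, x = x(Y G_0\, x)$; this shows $\vec G$ is an fgv.

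The hard part will be showing that $\vec G$ is not rectifying. Here I would test $\vec G$ against the wfpc $\Ups$ itself and compute the reduction graph of $\Ups G_0$, which should inherit a genuinely growing counter because the approximants $\Ups^k_x$ never coincide. As in the argument that $\Ups^0_x$ and $\Ups^1_x$ have no common reduct — where the number of leading $x$'s needed to reach stage $j$ is rigidly correlated with the counter value, so a fixed offset can never be absorbed — I would show that $\Ups G_0\, x$ and $x(\Ups G_0\, x)$ admit no common reduct, whence $\Ups G_0 \notin \FPC$ and $\vec G$ is not rectifying. The delicate point is arranging a \emph{single} definition of $G_0$ for which the same mechanism resets on fpcs yet accumulates on $\Ups$; the reduction-graph transparency inherited from $\Ups$ is what makes the non-convertibility argument (rather than a mere Böhm-tree comparison, which cannot see the counter) go through. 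Finally, combining weak constancy with Theorem~\ref{thm} — since $\vec G$ is not rectifying it is not compact — yields that $\vec G$ is weakly compact but not compact, so compactness is strictly stronger than weak compactness even among fgvs.
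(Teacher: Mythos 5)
Your proposal correctly identifies the shape of what must be produced --- a generator that is an fgv, that pushes its argument out of the B\"ohm tree (weak constancy), and that fails on some concrete wfpc carrying a visible ``counter'' (non-rectifying) --- and the instinct to use an $\Ups$-style term with a transparent reduction graph as the witness is sound. But as written there is a genuine gap: the term $G_0$ is never defined. The single-spine reduction graph, the claim that the counter ``resets'' on genuine fpcs but ``accumulates'' on $\Ups$, and the claim that $z \notinfty G_0^k(z)$ are all desiderata for a term you have not exhibited, and you yourself flag that arranging one definition satisfying all of them simultaneously is ``the delicate point.'' That delicate point is the entire content of the proposition. There is a real tension to resolve: the fgv property requires a \emph{finite} conversion $Y G_0\, x = x(Y G_0\, x)$ driven by some identity valid for fpcs but not for all wfpcs, while weak constancy requires $G_0^k(z)$ to bury $z$ forever; it is not obvious that one term does both, and no candidate is offered.

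Second, your route to non-rectifying-ness --- a direct proof that $\Ups G_0\, x$ and $x(\Ups G_0\, x)$ have no common reduct --- is much harder than the sketch suggests. The argument separating $\Ups^0_x$ from $\Ups^1_x$ works because every term in that reduction graph has a unique redex; $\Ups G_0$ for a nontrivial $G_0$ will not be deterministic in this way, and from-scratch non-convertibility statements in the untyped calculus (Statman's conjecture being the paradigm) are exactly what resist such attacks. The paper sidesteps this entirely. It takes a \emph{two-term} generator $\vec G = (P,R)$ with $Pxy = yx$ and $Ryz = WW(yQz)z$, where the auxiliary combinator $Q$ (with $Qyz = z(yQz)$) makes $(P,Q)$ an \emph{accretive} fgv. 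Being accretive, $(P,Q)$ is not weakly compact, hence not compact, hence not rectifying by Theorem \ref{thm} --- so some wfpc $Z$ has $ZPQ \notin \FPC$, with all the non-convertibility work already absorbed into the proof of that theorem. The reset on fpcs is then the identity $x(YPQx) = YPQx$, valid because $YPQ$ is fpc; weak constancy holds because $PzRx = WW(zQx)x$ pushes $z$ to infinite depth; and the deterministic reduction graph of $ZPRx$ shows that a conversion $ZPRx = x(ZPRx)$ would force $x^n(ZPQx) = x^{n+1}(ZPQx)$, i.e.\ $ZPQ \in \FPC$, a contradiction. To repair your plan, either exhibit $G_0$ explicitly and carry out the full reduction-graph analysis, or, better, imitate this factoring through an auxiliary accretive generator so that Theorem \ref{thm} supplies the non-convertibility for you.
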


\begin{proof}
Consider the following combinators:
\begin{align*}
P x y &=y x\\
Q y z &=z (y Q z)\\
W w p z &=z (w w (z p) z) \\
R y z &=W W (y Q z) z
\end{align*}

First we observe that $(P,Q)$ is an fgv:
for $Y$ fpc, we have
\begin{equation}
\label{ypqx}
Y P Q x
=P (Y P) Q x
= Q (Y P) x
= x(Y P Q x).
\end{equation}

We claim that $(P,Q)$ is not rectifying.
If it was, then by the previous theorem, it would be compact, hence weakly compact.
To the contrary, \eqref{ypqx} shows that $(P,Q)$ is accretive.  So it cannot be rectifying.

Next, we verify that $(P,R)$ is again fgv:
\begin{align*}
Y P R x
= P (Y P) R x
= R (Y P) x
&= W W (Y P\, Q\, x)\, x\\
&= x (W W (x (Y P\, Q\, x))\, x)\\
&=_{\eqref{ypqx}} x (W W (Y P\, Q\, x)\, x)
= x (Y P R\ \! x)
\end{align*}
At the same time, we claim $(P,R)$ is weakly constant with modulus $1$:
\begin{align*}
 P^1(z) R x = P z R x
= R z x
&= W W (z Q x) x\\
&= x (W W (\cdots) x)\\
&= x^2 (W W (\cdots) x)\\
&= \cdots\\
&= x^n (\cdots)
\end{align*}
The variable $z$ is being pushed to infinity, and does not appear
in the B\"ohm tree of $P z R x$---nor
in the B\"ohm tree of $P z R = \lambda x. P z R x$.
That is, $z \notin^\infty P^1(z)R$.
Indeed, $\vec G = (P,R)$ is weakly constant.  We claim it is not rectifying.

For a wfpc $Z$, the term $Z P R$ reduces as follows:
\begin{align*}
  Z P R x
\thra P (Z P) R x
\to^2 R (Z P) x
&\to^2 W W (Z P Q x) x\\
&\to^3 x (W W (x (Z P Q x)) x)\\
&\to^3 x^2 (W W (x^2 (Z P Q x)) x)\\
&\to \qquad \vdots\\
  Z P R x \quad
\thra^{z_0 + 2 + 2 + 3n}\hspace{-0.5cm} &\quad \ \ x^n(W W (x^n (Z P Q x)) x)
\end{align*}
From this analysis, it is manifest that any common reduction
\[Z P R x\ \ \thra\ \ \cdot \ \ \thla\ \ x(Z P R x) \]
must contain a common reduction between
\[x^n(Z P Q x)\ \ \thra\ \ \cdot \ \ \thla\ \ x^{n+1}(Z P Q x). \]
As we observed earlier, $(P,Q)$ is not rectifying, so there
exist wfpcs $Z$ for which such conversion is not possible.
Thus $(P,R)$ is not rectifying either.  This completes the proof.

(Note that the modulus of constancy can be adjusted to any $k > 0$ by passing the argument of the generator into the head position $k$ times before pushing it to infinity.)
\end{proof}

\begin{figure}
	\centering
	\includegraphics[scale=0.5]{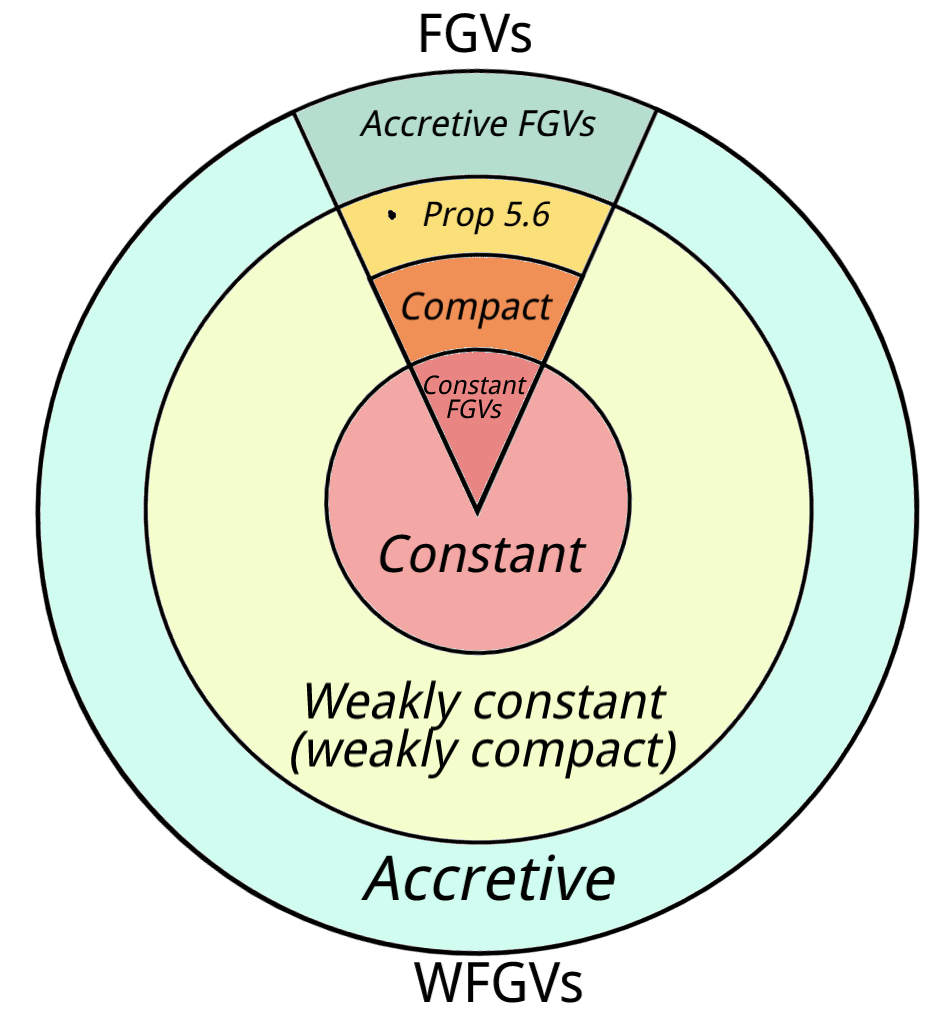}
	\caption{The hierarchy of (weak) fpc generators}
	\label{fig}
\end{figure}

\section{The monoid of wfgvs}
\newcommand{\gmon}{\mathcal G}
\newcommand{\fmon}{\mathcal F}
\newcommand{\gcat}{\odot}

The wfpc and fpc generators have an obvious monoid structure:
\[(G_0,\dots,G_n) \gcat (G_0',\dots,G_m')
\df (G_0,\dots,G_n,G_0',\dots,G_m') \]
The identity is the trivial generator $()$.
The concatenation operation is associative, and satisfies
the identity laws.
We thus have a monoid $(\gmon,\gcat,())$ of wfgvs,
containing a submonoid $(\fmon,\gcat,())$ of fgvs.

Neither of these monoids is finitely generated, as there are infinitely many constant fgvs of the form $\kc \Theta_M$ that cannot be obtained by composition of more elementary ones.

\subsection*{Extensional equality}

Since the primary interest in (w)fgvs is in their ability to generate new (w)fpcs from old, it is natural to identify generators having the same functional behavior.

\begin{defi}
	We say a fgv or wfgv $\vec G$ is \emph{extensionally equal} to $\vec G'$, written $G \simeq G'$, if for every \emph{fpc} $Y$, $Y\vec G = Y \vec G'$.
  (Note that this is an equivalence relation on $\gmon$, preserved by $\gcat$.)
\end{defi}

\newcommand{\Cc}{\mathtt{C}}
\begin{exas}
\begin{itemize}
	\item If $\vec G$ is a constant generator, say, $Y \vec G = Z$ for all $Y$, then $\vec G \simeq (\kc Z)$:
	\[ Y (\kc Z) = \kc Z (Y' (\kc Z)) = Z = Y \vec G \]
	\item
    Recall the combinator $\Cc = \lambda f x y. f y x$.
    Note that $\Cc \kc = \kc \ic$, and $\Cc (\Cc \kc) = \Cc (\kc \ic) = \kc$.
  Let $G y z = z (y (\Cc z)) (\delta (y (\Cc z)))$.
  Then $(G,\kc)$, and $(G, \Cc \kc)$ are fgvs, and $(G,\kc) \simeq (G,\Cc \kc)$:
  \begin{align*}
  	Y \in \FPC \then
  	Y G \kc &= G (Y G) \kc
  	= \kc (Y G (\Cc \kc)) (\delta (Y G (\Cc \kc)))
  	= Y G (\Cc \kc) \\
  	&= G (Y G) (\Cc \kc)
  	= (\Cc \kc) (Y G (\Cc (\Cc \kc))) (\delta (Y G (\Cc (\Cc \kc))))\\
  	& \hspace{2.415cm} = (\kc \ic) (Y G \kc) (\delta (Y G \kc)) = \delta (Y G \kc) \in \FPC
  \end{align*}
\end{itemize}
\end{exas}

The reason that in the definition of $\simeq$ the quantifier ranges over fpcs both in the case of fgvs as well as wfgvs is that, when the quantifier is taken over all wfpcs, it makes the resulting notion of equality much more restrictive, as we shall now demonstrate.

Because we obviously want equal fgvs to remain equal as wfgvs, the definition of extensional equality is expressed in terms of behavior on fpcs in both contexts.

\begin{prop}
	\label{ee}
	If $Y \vec G = Y \vec G'$ for every wfpc $Y$,
	then for some $k$, $\delta^k(z)\vec G = \delta^k(z)\vec G'$.
\end{prop}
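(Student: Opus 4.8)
The plan is to reuse the ``continuity'' argument from the nontrivial direction of Theorem~\ref{thm}, feeding it the two sides $\Ups\vec G$ and $\Ups\vec G'$ in place of $\Ups\vec G x$ and $x(\Ups\vec G x)$. The engine of that argument is the fresh variable $c$ buried inside $\Ups$: it labels the unfolding depth of $\Ups$, so that any \emph{finite} reduction out of a term containing $\Ups$ can expose only finitely much of its infinite B\"ohm tree. That finiteness is exactly what will produce the concrete bound $k$.

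First I fix $c \# \vec G, \vec G'$ and instantiate the hypothesis at the wfpc $\Ups$, obtaining $\Ups\vec G = \Ups\vec G'$. By the Church--Rosser theorem these have a common reduct $X$, so
\[ \Ups\vec G \thra X \thla \Ups\vec G'. \]
Writing $\Ups\vec G = C_1[\Ups G_0]$ and $\Ups\vec G' = C_2[\Ups G_0']$ for the evident contexts $C_1, C_2$, I apply the synchronization algorithm (claims (1)--(4) in the proof of Theorem~\ref{thm}) to each of these two reductions, continuing them past $X$ and reconciling by confluence and by ``bumping all $\Ups^{k_i}$s along'' to a common stage. This produces a single synchronized term
\[ X' \equiv C'[\Ups^k_N,\dots,\Ups^k_N] \]
reachable from \emph{both} $\Ups\vec G$ and $\Ups\vec G'$, in which every occurrence of $c$ is displayed inside one of the shown copies of $\Ups^k_N$, all at a common stage $k$ and with a common argument $N$ (necessarily a common reduct of $G_0$ and of $G_0'$, since these occurrences are literally the same subterms of $X'$ traced back along the two reductions).

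Next I lift. By the last claim in the proof of Theorem~\ref{thm}, each of the two synchronized reductions is the image, under the substitution $[ux := \Ups^k_x]$, of a reduction out of $\delta^k(u)$, namely
\[ \delta^k(u)\vec G \thra \overline{X'} \qquad\text{and}\qquad \delta^k(u)\vec G' \thra \overline{X'}, \]
where $\overline{X'}$ is obtained from $X'$ by the ``de-labeling'' that replaces each displayed $\Ups^k_N$ by $uN$. The crucial point is that $\overline{X'}$ depends only on the term $X'$ and not on which reduction produced it: because $c$ is fresh, every occurrence of $c$ sits inside exactly one displayed $\Ups^k_N$, so the de-labeling is unambiguous and the two lifts land at the \emph{same} term $\overline{X'}$. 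Hence $\delta^k(u)\vec G = \overline{X'} = \delta^k(u)\vec G'$, and renaming $u$ to $z$ yields the desired $\delta^k(z)\vec G = \delta^k(z)\vec G'$, with $k$ finite because $X$ is reached in finitely many steps.

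The main obstacle is confined to these synchronization/lifting steps, and is precisely the one already handled in Theorem~\ref{thm}: one must check that the reduction \emph{from the $\vec G'$ side} to the shared synchronized term $X'$ is itself of the liftable form, even though $X'$ was obtained while synchronizing the $\vec G$ side. This is exactly where freshness of $c$ does the work --- since $c$ can only be introduced and propagated through unfoldings of $\Ups$, any reduction respects the labeling and therefore lifts once its endpoints are in synchronized form. Verifying this, and that the bumping to a common stage $k$ keeps both lifts honest, is the only delicate bookkeeping; everything else is the routine reduction analysis of $\Ups$ carried out in Section~\ref{s:4}.
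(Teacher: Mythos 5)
Your proposal is correct and follows essentially the same route as the paper: instantiate the hypothesis at the labelled wfpc $\Ups$, take a common reduct by Church--Rosser, and reuse the synchronization-and-lifting machinery of Theorem~\ref{thm}, with freshness of the label variable guaranteeing that every surviving occurrence of it is simultaneously traceable to a descendant of $\Ups G_0$ and of $\Ups G_0'$. The only presentational difference is that the paper isolates the one genuinely new step --- first proving $G_0 = G_0'$ so that both sides of the conversion can be rewritten to contain the \emph{same} subterm $\Ups G_0$ before the Theorem~\ref{thm} argument is invoked as a black box --- whereas you extract this equality as a byproduct of the synchronization; the underlying observation (the displayed copies of $\Ups^k_N$ are literally the same subterms of the common reduct, so $N$ is a common reduct of both $G_0$ and $G_0'$) is the same.
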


\begin{proof}
	This statement follows by the same reasoning as used in Theorem \ref{thm}.
	Take $z \# \vec G, \vec G'$, and let $\Ups=\Ups_z$ be the canonical wfpc defined there with a deterministic reduction graph that uses the variable $z$ to track its unfolding history.
	The argument subsequently showed how every conversion
	$C[\Ups M] \thra X \thla C'[\Ups M]$ can be extended through $X \thra X'$, such that
	$X' = D[\Ups^k_N,\cdots,\Ups^k_N]$, with $M \thra N$ and every occurrence of $z$ in $X'$ to be found among the displayed $\Ups^k_N$.  We could then conclude that the common reduction may be lifted to a finite truncation of $\Ups$.
	In the present case, our starting conversion has the form
	\begin{equation}
	\label{Ups}
    C[\Ups G_0] := [\Ups G_0] G_1 \cdots G_n = [\Ups G_0'] G_1' \cdots G_{n'}' =: C'[\Ups G_0']
	\end{equation}

	To justify application of the same argument, we should thus argue why $G_0 = G_0'$.
	Let $X$ be a reduct of $C[\Ups G_0]$.  By recalling the reduction graph of $\Ups$, it is evident that every innermost occurrence of $z$ in $X$ is applied to a reduct of $G_0$.
	If $X$ is also a reduct of $C'[\Ups G_0']$, then the same conclusion will hold, with $G_0'$ in place of $G_0$.   Thus, the very fact of occurrence of $z$ in $X$ forces $G_0$ and $G_0'$ to be convertible.
	Of course, if $z$ does not occur in $X$ at all, that only means that all descendants of $\Ups$ have already been erased, in which case there is nothing left to prove.
	So $G_0 = G_0'$.  We can thus adjust conversion in \eqref{Ups} to
	\[ C[\Ups G_0] \equiv [\Ups G_0] G_1 \cdots G_n
		= [\Ups G_0] G_1' \cdots G_{n'}' \equiv C'[\Ups G_0] = C'[\Ups G_0'] \]
    where the conversion on the right takes place inside the subterm that $\Ups$ is applied to.

    Now we extend the other conversion to a common reduct
    \[ [\Ups G_0] G_1 \cdots G_n \thra D[\Ups^k_N,\cdots,\Ups^k_N]
    	\thla [\Ups G_0] G_1' \cdots G_{n'}' \]
    and proceed, as in the proof of Theorem \ref{thm}, to lift these reductions to
    \[ [\delta^k(u)G_0] G_1 \cdots G_n \thra D[uN,\dots,uN] \thla [\delta^k(u)G_0]G_1' \cdots G_{n'}'. \]
    Converting $G_0$ in the right term to $G_0'$, we obtain the desired result.
\end{proof}
From now on, we will consider the monoid $\gmon$ up to extensional equality.
We will also write concatenation of vectors by juxtaposition: $\vec F \vec G = \vec F \gcat \vec G$.

\subsection*{Ideals}
\begin{defi}
	A \emph{two-sided ideal} in a monoid $(M,\cdot)$ is a set $I \subseteq M$ such that
	\[ i \in I,\, m \in M \then i\cdot m,\, m \cdot i\, \in\, I. \]
\end{defi}

\begin{prop}\leavevmode
\label{ideal}
\begin{enumerate}
	\item
The constant generators form the minimal ideal in both monoids.
\item
The weakly constant/weakly compact wfgvs form a two-sided ideal in $\gmon$.
\item
	The compact fgvs form a two-sided ideal in  $\fmon$
	(and a right ideal in $\gmon$).
\end{enumerate}
\end{prop}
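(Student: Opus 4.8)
The plan is to treat the three parts separately, in each case reducing to the functional characterizations already available: a constant generator is extensionally $(\kc Z)$, weakly compact equals weakly constant equals non-accretive, and (by Theorem~\ref{thm}) a compact fgv is exactly a rectifying one. Throughout I choose the tracking variable $z$ fresh for all generators in sight.

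First I would dispatch (1). To see that the constant generators form a two-sided ideal, let $\vec G$ be constant with output $Z$ (so $Y\vec G = Z$ for every wfpc $Y$) and let $\vec H$ be arbitrary. Then $Y(\vec H \gcat \vec G) = (Y\vec H)\vec G = Z$, since $Y\vec H$ is again a wfpc, and $Y(\vec G \gcat \vec H) = (Y\vec G)\vec H = Z\vec H$, a term independent of $Y$; so both composites are constant. Minimality then follows from the representation $\vec G \simeq (\kc Z)$: given any nonempty two-sided ideal $I$ and any $\vec K \in I$, right-absorption gives $\vec K \gcat (\kc Z) \in I$, while for every fpc $Y$ we compute $Y\vec K(\kc Z) = (Y\vec K)(\kc Z) = Z = Y(\kc Z)$, so $\vec K \gcat (\kc Z) \simeq (\kc Z)$. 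Letting $Z$ range over all (w)fpcs shows $(\kc Z) \in I$ for each, i.e. every constant generator lies in $I$. Hence the constant generators are the smallest ideal, in both $\fmon$ (with $Z$ an fpc) and $\gmon$ (with $Z$ a wfpc).

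For (2) I use weakly compact $=$ weakly constant. Right multiplication is immediate: if $\vec G = (G_0,\dots,G_n)$ is weakly compact with modulus $k$, then $G_0^k(z)G_1\cdots G_n \in \WFPC$, and since $\vec G \gcat \vec H$ has the same head $G_0$,
\[ G_0^k(z)G_1\cdots G_n\, H_0 \cdots H_m = \bigl(G_0^k(z)G_1\cdots G_n\bigr)\vec H \in \WFPC, \]
being a wfpc applied to the wfgv $\vec H$; thus $\vec G \gcat \vec H$ is weakly compact with modulus $\le k$. The real content is therefore left multiplication $\vec H \gcat \vec G$, and this is where the work lies. I would argue the contrapositive through non-accretivity: assuming $\vec G$ weakly compact with modulus $k$, I must exhibit, for suitable $k'$, a witness that $z \notinfty \delta^{k'}(z)(\vec H \gcat \vec G) = \bigl(\delta^{k'}(z)\vec H\bigr)\vec G$. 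The guiding observation is that $\delta^{k'}(z)\vec H$ is obtained from the wfpc $\Theta\vec H = \delta^{k'}(\Theta)\vec H$ by replacing the distinguished bottom occurrence of the input $\Theta$ with the free variable $z$; since that occurrence sits under $k'$ head applications, $z$ is pushed to a Böhm depth that grows with $k'$. As $\vec G$ only consumes the top $k$ levels of its wfpc input, once $z$ is buried below level $k$ the generator $\vec G$ cannot surface it, yielding $z \notinfty \bigl(\delta^{k'}(z)\vec H\bigr)\vec G$.

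The hard part — and the main obstacle — is exactly to make ``$\vec G$ consumes only the top $k$ levels'' rigorous and to convert $\vec H$'s unfolding depth into $\vec G$'s level modulus. I expect this to require the finitary/continuity argument of Theorem~\ref{thm} and Proposition~\ref{ee}: any finite reduction witnessing the relevant initial segment of the Böhm tree of $\bigl(\delta^{k'}(z)\vec H\bigr)\vec G$ touches only finitely much of the input, so using the deterministic tracking wfpc $\Ups$ one lifts the witness and confirms that $z$ never reappears. Part (3), by contrast, is clean via Theorem~\ref{thm}: a compact fgv $\vec I$ is rectifying, $\WFPC \to \FPC$. For any wfgv $\vec M$ and any $Y \in \WFPC$ we have $Y\vec M \in \WFPC$, hence $Y(\vec M \gcat \vec I) = (Y\vec M)\vec I \in \FPC$, so $\vec M \gcat \vec I$ is rectifying and therefore compact; this gives the asserted one-sided ideal in $\gmon$. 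If moreover $\vec M$ is itself an fgv, then $Y\vec I \in \FPC$ and $(Y\vec I)\vec M \in \FPC$ as well, so $\vec I \gcat \vec M$ is rectifying too, yielding the two-sided ideal in $\fmon$. The asymmetry in $\gmon$ is genuine: for a proper wfgv $\vec M$ the composite $\vec I \gcat \vec M$ sends $\FPC$ only into $\WFPC$, and indeed $\bigl(G_0^k(z)\cdots\bigr)\vec M$ is in general merely weakly compact, so compactness fails on that side.
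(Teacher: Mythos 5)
Parts (1) and (3) of your argument are correct and essentially the paper's own: the direct computation for constancy, minimality via right-composition with $(\kc Z)$, and the reduction of part (3) to the equivalence of compactness and rectification from Theorem~\ref{thm}. The genuine gap is in part (2), and you flag it yourself: left absorption, i.e.\ that $\vec H \gcat \vec G$ is weakly constant whenever $\vec G$ is. What you present as a ``guiding observation'' --- that in $\delta^{k'}(z)\vec H$ the variable $z$ sits at B\"ohm depth growing with $k'$ --- is not an observation but the entire content of the step: the syntactic position of $z$ inside $\delta^{k'}(z)$ says nothing a priori about where the residuals of $z$ land in the B\"ohm tree of $\delta^{k'}(z)\vec H$, since $\vec H$ could in principle surface a deeply nested subterm of its argument at a shallow position. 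Ruling that out is exactly what must be proved, and your proposal to redo the syntactic lifting machinery of Theorem~\ref{thm} with the tracking wfpc $\Ups$ is neither carried out nor necessary.

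The paper closes the step with a single appeal to the Continuity Theorem \cite[14.3.22]{B84}: the map $\lambda y.\, y\vec H$ sends $\WFPC$ to $\WFPC$, and the terms $\lambda x. x^j(\Omega)$ form a neighborhood basis of the common B\"ohm tree of all wfpcs, so for the modulus $k$ of $\vec G$ there is an $l$ with $(\lambda x. x^l(z))\vec H = \lambda x. x^k(X)$ for some $X$ possibly containing $z$. Hence
\[ (H_0)^l(z)\,H_1\cdots H_m\,\vec G \;=\; (\lambda x. x^k(X))\,\vec G \;=\; G_0^k(X')\,G_1\cdots G_n, \]
and since B\"ohm-tree equality is closed under substitution, $z' \notinfty G_0^k(z')G_1\cdots G_n$ yields $z \notinfty G_0^k(X')G_1\cdots G_n$, i.e.\ $\vec H \gcat \vec G$ is weakly constant with modulus at most $l$. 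This is the missing lemma in your sketch; without it (or a completed version of the $\Ups$-lifting argument you gesture at) part (2) is not established. Your right-absorption argument for part (2), via weak compactness of $G_0^k(z)G_1\cdots G_n$ applied to the wfgv $\vec H$, is a fine and slightly more conceptual variant of the paper's one-line claim.
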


\begin{proof}
\begin{enumerate}
	\item Let $\vec G$ be constant, so that $Y \vec G = Z$ for all (w)fpc $Y$.
  Let $\vec G'$ be arbitrary.
  Then $Y \vec G \vec G' = Z \vec G'$ for all $Y$, and $Y \vec G' \vec G = Z$
  for all $Y$.
  Thus $\vec G \vec G'$ and $\vec G' \vec G$ are constant.
	Moreover, since any ideal includes the constant generators by composition on the right, these generators together constitute the minimal ideal.
	\item
  Let $\vec G \in \gmon$ be weakly constant, so that $z \notinfty G_0^k(z)G_1 \cdots
    G_n$.
    Let $\vec G' \in \gmon$ be arbitrary.  Clearly,
    $z \notinfty G_0^k(z) G_1 \cdots G_n G_0' \cdots G_m'$.
    That is, $\vec G \vec G'$ is weakly constant.
On the other hand, we know that $\vec G'$ maps wfpcs to themselves:
\[ (\lambda y. y\vec G') : \WFPC \to \WFPC\]
All wfpcs have the same B\"ohm tree, and in the tree topology, its neighborhood basis consists of the set $\setof{\lambda x. x^n(\Omega) \mid n \ge 0}$.
By Continuity Theorem \cite[14.3.22]{B84}, there exists
$l \ge 0$ such that
\begin{align*}
(\lambda x. x^l(z))\vec G' = (\lambda y.y \vec G') (\lambda x.x^{l}(z))
&= \lambda x.x^k(X)
\end{align*}
for some $X$, possibly containing $z$.
And yet,
\begin{align*}
  z \notin^\infty G_0^k(X)G_1 \cdots G_n
= (\lam x.x^k(X))\vec G
= (\lambda x. x^l(z)) \vec G' \vec G
= (G'_0)^l(z)G'_1\cdots{}G'_m\vec G.
\end{align*}
Indeed, $\vec G' \vec G$ is weakly constant.

\item
It is immediate that the rectifying fgvs form a two-sided ideal
in $\fmon$.  By Theorem \ref{thm}, so do the compact ones.
Also, for $\vec G, \vec G' \in \gmon$, $\vec G'$ rectifying
clealry implies $\vec G \vec G'$ rectifying. \qedhere
\end{enumerate}
\end{proof}

\subsection*{Green's relations}

\newcommand{\gLeft}{\mathpzc{L}}
\newcommand{\gRight}{\mathpzc{R}}
\newcommand{\gle}{\preccurlyeq}
The structure of many monoids can be characterized in terms of Green's relations.  Here we record several observations about these relations in $\gmon$, which could be useful for future study of this monoid.
\begin{defi}
	For $\vec G, \vec G' \in \gmon$, put
	\begin{align*}
    \gLeft(\vec G)  &\!\quad =\quad \setof{\vec H \vec G \mid \vec H \in \gmon}\\
    \gRight(\vec G) &\!\quad =\quad \setof{\vec G \vec H \mid \vec H \in \gmon}\\
    \vec G \gle_{\gLeft} \vec G' &\iff \,\gLeft(\vec G) \subseteq \gLeft(\vec G') \, \iff  \vec G \in \gLeft(\vec G')\\
    \vec G \gle_{\gRight} \vec G' &\iff \gRight(\vec G) \subseteq \gRight(\vec G')\iff \vec G \in \gRight(\vec G')\\
    \vec G \sim_{\gLeft} \vec G' &\iff \,\gLeft(\vec G) = \gLeft(\vec G')\\
    \vec G \sim_{\gRight} \vec G' &\iff \gRight(\vec G) = \gRight(\vec G').
	\end{align*}
\end{defi}
\newcommand{\vg}{\vec G}
\newcommand{\vf}{\vec F}

\begin{enumerate}
		\item If $\vec G \simeq (\kc Z)$ is a constant generator, then
		$\gLeft(\vec G) = \setof{\kc Z}$, so all constant generators are each in their own left class.
		That is, $\vec G \in \gLeft (\kc Z)$ implies $\vec G \simeq (\kc Z)$.
		\item On the other hand, $(\kc Z,\kc Z') \simeq (\kc Z')$,
		thus $\kc Z' \in \gRight (\kc Z)$.  Since the choice of $Z,Z'$ was arbitrary,
		$\kc Z \sim_{\gRight} \kc Z'$ for all $Z$ and $Z'$.
		That is, constant generators are all in the same right class.
		Since constant generators form an ideal, $\vec G {\sim_\gRight} (\kc Z)$ or $\vec G \gle_\gRight (\kc Z)$ imply $\vec G \simeq (\kc Z')$ for some $Z'$.
		So $\gRight(\kc Z) = \setof{(\kc Z') \mid Z' \in \mathsf{WFPC}}$
		is the ideal of all constant generators.
		\item Similarly, if $\vec G$ is weakly constant, then so is every element of $\gLeft(\vg)$ and $\gRight(\vg)$.
		That is, the only (w)fgvs that can be congruent to $\vec G$ modulo $\sim_\gLeft$ or $\sim_\gRight$ are again weakly constant.
		\item If $\vec G$ is compact, then so is every element of $\gLeft(\vg)$.
	 	When restricted to $\fmon$, both $\gLeft(\vg)$ and $\gRight(\vg)$ consist of compact generators.
		\item Suppose $\vg \sim_\gRight \vg'$.  Then we can find $\vf, \vf' \in \gmon$ such that $\vg \simeq \vg' \vf$, and $\vg' \simeq \vg \vf'$.
		But then $\vg \simeq \vg \vf' \vf$,
		and $\vg' \simeq \vg' \vf \vf'$.
		If $\vg \simeq \vg \vf' \vf$, then for every $Y$, $Y \vg = Y \vg \vf' \vf$ is a fixed point of $\vf' \vf$.
		By Conjecture \ref{conj}, $\vf' \vf$ is weakly constant.
		By Proposition \ref{ideal}, so is
		$\vg \vf' \vf$.
		But $\vg \vf' \vf \simeq \vg$.  So $\vg$ is weakly constant.
		Of course, everything we just said applies to $\vg'$ as well.
		We conclude that, modulo Conjecture \ref{conj}, nontrivial
		$\sim_\gRight$-relations can only exist between weakly constant wfgvs.
\end{enumerate}
In fact, we believe that for accretive $\vec G, \vec G'$, either $\vec G \sim_\gRight \vec G'$ or $\vec G \sim_\gLeft \vec G'$ implies $\vec G \simeq \vec G'$.
Indeed, this is a consequence of the ``freeness'' conjecture we formulate next.

\subsection*{Unique factorization of accretive generators}
\begin{defi}
	An accretive generator $\vec G$ is \emph{prime} if
	$\vec G = \vec G_1 \vec G_2$ implies $\setof{\vec G_1,\vec G_2} = \setof{\vec G,()}$.
\end{defi}
The following ``unique factorization conjecture'' states that the accretive generators are freely generated by the prime ones.
\begin{conj} \label{free}
If $\vec G \in \gmon$ is accretive, then $\vec G = \vec G_1 \cdots \vec G_k$, where each $\vec G_i$ is prime.

Furthermore, this decomposition is unique up to extensional equality.  That is, for all prime $\vec G'_1, \dots, \vec G'_{k'}$,
if $\vec G = \vec G'_1 \cdots \vec G'_{k'}$, then $k = k'$
and $\vec G_i \simeq \vec G'_i$ for all $i \in \setof{1,\dots,k}$.
\end{conj}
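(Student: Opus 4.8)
The plan is to prove the conjecture by showing that the accretive generators, taken modulo $\simeq$ and with the trivial generator $()$ adjoined, form a \emph{free} monoid whose free generators are exactly the primes. Both assertions then follow at once from the classical characterisation (Levi) that a monoid is free precisely when it is \emph{equidivisible} and carries a length homomorphism $\ell$ into $\nat$ with $\ell(\vec G) = 0$ exactly when $\vec G \simeq ()$. The first structural observation I would record is that the accretive generators are exactly the complement of the weakly constant ideal: by Proposition \ref{ideal}(2) the weakly constant wfgvs form a two-sided ideal, so if $\vec G \simeq \vec G_1 \vec G_2$ with $\vec G$ accretive, then neither factor can be weakly constant, whence both $\vec G_1$ and $\vec G_2$ are again accretive. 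Thus factorisation never leaves the accretive world, and every nontrivial factorisation splits into two nontrivial accretive pieces.

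For the existence of a prime decomposition I would use well-founded induction along $\ell$. The natural candidate for $\ell(\vec G)$ is the asymptotic \emph{head-reduction delay per B\"ohm level} imposed by the generator --- precisely the ``slowdown'' phenomenon discussed for $(\delta)$ at the start of Section \ref{s:4}, where one application of $\delta$ inserts exactly one extra head step per level. Concretely, running $\Ups \vec G$ (the tracking wfpc of Theorem \ref{thm}) produces a deterministic reduction in which a fresh variable marks successive unfoldings; I would define $\ell(\vec G)$ as the number of head contractions consumed between consecutive labels, normalised to a positive integer on nontrivial accretive generators. Granting that this quantity is additive under $\gcat$ and invariant under $\simeq$, existence is immediate: a nontrivial accretive $\vec G$ that is not prime splits as $\vec G_1 \vec G_2$ with $\ell(\vec G_i) < \ell(\vec G)$, and the induction on $\ell$ terminates at primes.

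Uniqueness I would obtain from the same criterion by establishing (i) cancellativity and (ii) equidivisibility of the accretive monoid modulo $\simeq$. For cancellativity, left and right cancellation amount to the rigidity already exploited in Proposition \ref{ee}: the deterministic reduction graph of $\Ups$ forces the head structure of a generator to be reconstructible from the behaviour of $Y \vec G$ on the tracking wfpc, so $\vec F \vec G \simeq \vec F \vec G'$ (resp.\ $\vec F \vec G \simeq \vec F' \vec G$) propagates down to $\vec G \simeq \vec G'$ (resp.\ $\vec F \simeq \vec F'$) exactly as the argument there forced $G_0 = G_0'$. Equidivisibility --- that $\vec F \vec G \simeq \vec F' \vec G'$ yields a common overlap $\vec H$ through which one pair factors --- would be proved by synchronising the two deterministic $\Ups$-reductions via confluence, as in the final steps of Theorem \ref{thm}, reading off the shared initial segment from the shared labelled structure. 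Together with the length function, Levi's theorem then delivers freeness, and with it the unique factorisation asserted.

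The main obstacle is the construction of the length function and the proof of equidivisibility, both of which require controlling how $\simeq$ --- an equality tested only on fpcs --- interacts with concatenation of \emph{accretive} generators. The crux is to show that extensional equality cannot create accidental coincidences that shorten a generator or merge distinct overlaps; this is the rigidity that the $\Ups$-labelling is designed to supply, but turning the labelling into a genuine additive, $\simeq$-invariant invariant is open. Moreover, the argument is entangled with Conjecture \ref{conj}: as the discussion of Green's relations after Proposition \ref{ideal} shows, ruling out accretive generators with fixed points is exactly what forces nontrivial $\sim_{\gRight}$-classes to be weakly constant, and the same input is what would guarantee that $\ell$ is strictly positive on every nontrivial accretive generator. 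I therefore expect any complete proof along these lines to be conditional on Conjecture \ref{conj}, which would at the same time confirm the Green's-relation collapse for accretive generators anticipated at the end of the previous subsection.
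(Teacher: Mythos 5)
The first thing to note is that the paper offers no proof of this statement: it is posed as an open conjecture (Conjecture \ref{free}), so there is nothing to compare your argument against except the surrounding discussion. Your proposal is candid about being a plan rather than a proof, and as a plan the Levi-style framing (freeness via an additive length homomorphism plus equidivisibility) is a sensible way to organize what would be needed. Your opening observation is also sound: since the weakly constant generators form a two-sided ideal (Proposition \ref{ideal}), any factorization of an accretive generator consists of accretive pieces, so existence of a prime decomposition reduces to well-foundedness of nontrivial factorization.

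However, each of the three pillars you lean on is genuinely open, and two of them face a concrete obstruction you do not address. First, the length function: head-reduction counts are not invariants of beta-conversion, let alone of extensional equality $\simeq$ --- padding a generator with administrative redexes changes the ``delay per B\"ohm level'' while leaving the generator $\simeq$-equal --- so ``normalised to a positive integer'' hides precisely the hard part, and it is not clear that any additive, $\simeq$-invariant, strictly positive length exists. Second, and more seriously, all of the rigidity machinery you want to import from Theorem \ref{thm} and Proposition \ref{ee} tests generators against the weak fpc $\Ups$, whose deterministic, labelled reduction graph is what makes the synchronization arguments work. But $\simeq$ is defined by quantifying over \emph{fpcs only}, and the paper explicitly stresses that equality tested on all wfpcs is a much more restrictive relation; $\Ups$ is not an fpc, so $\vec F \vec G \simeq \vec F \vec G'$ gives you no information about behavior on $\Ups$, and cancellativity and equidivisibility modulo $\simeq$ cannot be extracted from those proofs as they stand. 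Finally, as you yourself note, strict positivity of the length on nontrivial accretive generators rests on Conjecture \ref{conj}, which is open. So the proposal identifies a reasonable architecture for attacking the conjecture, but it does not close any of the gaps that make the statement a conjecture rather than a theorem.
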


The conjecture implies that any relations in the monoid of wfpc generators under composition can only arise between non-accretive i.e., weakly compact generators.  In particular, the equations
\begin{align} \label{eq1}
	\vec F \vec G &= \vec F \\ \vec F \vec G &= \vec G \label{eq2}
\end{align}
admit no (non-trivial) solutions among the accretive generators.
As a result, the left and right classes of accretive generators would indeed all be distinct if Conjecture \ref{free} was to be validated.

And what about the solutions to \eqref{eq1} or \eqref{eq2} among the rest of $\gmon$?  The following examples show that there indeed exist solutions to these equations under certain conditions.  In all cases, (weak) compactness plays an essential role.
\begin{prop}
For $\vg$ weakly constant, there exists non-constant $\vf$ with $\vf \vg \simeq \vf$.
(In particular, $\vf \gle_\gLeft \vg$.)
\end{prop}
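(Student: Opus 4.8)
The plan is to exhibit a single non-trivial generator $\vf$ that sends every fpc to a (genuinely varying) fixed point of $\vg$, by feeding the generator's own input into the fixed-point construction of Proposition~\ref{fix} through a \emph{parametrized} Turing combinator. Since $\vg$ is weakly constant, it is weakly compact; let $k$ be the modulus, so that $F_0[z]:=G_0^k(z)G_1\cdots G_n\in\WFPC$, with tails satisfying $F_0[z]x=x^k(F_k[z]x)$, i.e.\ $F_0[z]=\delta^k(F_k[z])$. Put $H:=\lambda w.\,F_k[wG_0]$. Recalling that $\Theta_M=VVM$ is a fpc for every $M$, I would define
\[ \vf := \bigl(\lambda y.\, \delta^k(\Theta_y H)\bigr). \]
The point is that $\Theta_y$ is an honest fpc, so $\Theta_y H=H(\Theta_y H)=F_k[(\Theta_y H)G_0]$, and hence $\delta^k(\Theta_y H)=F_0[(\Theta_y H)G_0]$ is exactly the shape of the $\vg$-fixed point produced in Proposition~\ref{fix}.

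First I would check that $\vf$ is a wfgv and has the desired fixed-point behaviour. For a fpc $Y$, set $A:=Y\vf$; since $Y$ is fpc, $A=(\lambda y.\,\delta^k(\Theta_y H))A=\delta^k(\Theta_A H)$, which is a wfpc, so by Proposition~\ref{tfae}$(iii)$ this already witnesses that $\vf$ is a wfgv. Because $\Theta_A$ is itself a fpc, $\Theta_A H=F_k[(\Theta_A H)G_0]$, and the $k$-th tail of $A$ is precisely $\Theta_A H$; running the computation $X\vg=X$ of Proposition~\ref{fix} with $\Theta_A H$ in place of $W$ then yields $A\vg=A$. Thus $Y\vf\vg=Y\vf$ for every fpc $Y$, that is $\vf\vg\simeq\vf$; and then $\vf\simeq\vf\vg\in\gLeft(\vg)$, giving the parenthetical claim $\vf\gle_\gLeft\vg$.

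The substantive step is non-constancy. I must rule out that $\vf$ is a constant generator, i.e.\ that $z\notin F^{k'}(z)$ for some $k'$. The key observation is that $\Theta_z$ carries its parameter through \emph{every} finite reduct: in each reduct of $VVz$ the subterm $z$ still occurs, being merely pushed one unfolding deeper, so $z\in\Theta_z$, whence $z\in\delta^k(\Theta_z H)=F(z)$, and likewise $z\in F^{k'}(z)$ for all $k'$. Hence $\vf$ is not constant. (Consistently with the fact that $\gLeft(\vg)$ consists of weakly constant generators, the same $z$ \emph{is} pushed to infinity, so $z\notinfty F(z)$ and $\vf$ is weakly constant; it simply fails to be \emph{strongly} constant.)

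The main obstacle is the tension exposed in the last two steps: to make the self-referential output $A=\delta^k(\Theta_A H)$ a genuine $\vg$-fixed point, the inner recursion solver applied to $H$ must be an actual fpc — this is why the construction uses $\Theta_y$ rather than the bare input $y$, since $\delta^k(yH)$ would fail the tail equation $\Theta_y H=F_k[(\Theta_y H)G_0]$ — while non-constancy is salvaged precisely because $\Theta_y$ never discards its parameter. The one place requiring genuine care is the rigorous justification, parallel to Proposition~\ref{thetaz}, that $z$ survives in all finite reducts of $\Theta_z$; everything else reduces to the bookkeeping already carried out in Proposition~\ref{fix}.
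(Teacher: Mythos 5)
Your construction is essentially the paper's, repackaged as a single-component generator: where the paper takes $\vec F=(A,B)$ with $A=\lambda yb.\,b(y\delta)$ and $B=\lambda y.\,F_0[y(\lambda u.F_k[uG_0])G_0]$, so that $Y\!A\delta$ is forced to be an honest fpc and $U=(Y\!A\delta)(\lambda u.F_k[uG_0])$ ties the knot $U=F_k[UG_0]$, you achieve the same knot-tying with $U=\Theta_y H$. Your verification that $\vf$ is a wfgv and the computation $Y\vf\vg=Y\vf$ are correct and parallel the paper's.

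The gap is in the non-constancy step. The inference ``$z\in\Theta_z$, whence $z\in\delta^k(\Theta_z H)$'' is not valid: the relation $z\in M$ is not preserved by placing $M$ into a context, since the context may erase the occurrence. Here $H=\lambda w.\,F_k[wG_0]$, and nothing in the hypotheses forces $F_k[\,\cdot\,]$ to use its argument. Concretely, if $\vg$ is \emph{constant} --- a permitted instance of ``weakly constant'' --- then $F_0[z]=Z$ with $z\#Z$, the tails $F_j[\,\cdot\,]$ may be chosen independent of their argument, so $H=\kc F_k$ and $\Theta_z H=H(\Theta_z H)=F_k$ contains no $z$ at all; your $F(z)=\delta^k(F_k)$ is then $z$-free and $\vf$ is constant with modulus $1$. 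Even for $\vg$ weakly constant but not constant, knowing $z\in F_k[z]$ does not formally give $z\in F_k[(\Theta_z H)G_0]$, because the substituted term can create new redexes with the surrounding context that erase $z$. To be fair, the paper's own proof is equally informal at exactly this point --- it only asserts that $\vec F$ ``will not be constant in general'' --- but you state the inference as if it were automatic, and as stated it is false; a correct argument needs either an additional hypothesis on $\vg$ or a genuine analysis, in the spirit of Proposition \ref{thetaz}, showing that the parameter survives in every reduct of the particular terms $F^{k'}(z)$.
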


\begin{proof}
The idea is to make $\vec F$ generate the fixed points of $\vg$ according to the scheme in Proposition \ref{fix}.
Let $k$, $F_0$, $F_k$ be chosen as in the proof of that proposition.
Put $A = \lambda y b. b (y \delta)$, $B = \lambda y. F_0[y (\lambda u.F_k[uG_0]) G_0]$, and $\vec F = (A,B)$.
Observe that
\begin{align*}
	Y \in \FPC \then Y\! A \delta &= A (Y\! A) \delta = \delta (Y\! A \delta)\\
	Y\! A B &= A (Y\! A) B = B (Y\! A \delta) = F_0[ (Y\! A \delta)
	(\lam u. F_k[uG_0]) G_0].
\end{align*}
Since $Y\! A \delta$ is thereby forced to be fpc, it follows that $Y\! A B = F_0[U G_0]$, where $U = F_k[U G_0]$.  This allows us to calculate as in the proof of Proposition \ref{fix} that $Y\! A B$ is a fixed point of $\vec G$:
\[ Y \vec F \vec G = Y\! A B \vec G = Y\! A B\]
Note however, that $\vec F$ will not be constant in general, because it uses its fpc argument to define $U$.
\end{proof}

\begin{prop}
Let $\vec F=(F_0,\cdots,F_n)$ be wfgv with $n \ge 1$.
There exists a compact fgv $\vec G$ such that $\vec F \vec G \simeq \vec G$.
(In particular, $\vec G \gle_\gRight \vec F$.)
\end{prop}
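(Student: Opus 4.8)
The statement is immediate if we allow $\vec G$ to be constant: for any fpc $Z$ the constant generator $(\kc Z)$ is compact, and for every fpc $Y$ we have $Y(\kc Z)=Z=Y\vec F(\kc Z)$, so $\vec F(\kc Z)\simeq(\kc Z)$. The real content of the proposition — and the reason for the hypothesis $n\ge 1$ — is to produce a \emph{non-constant} compact witness, thereby showing that the right class below $\vec F$ is strictly larger than the (always available) constant generators. In every case the parenthetical claim comes for free: once $\vec F\vec G\simeq\vec G$ is known, taking $\vec H=\vec G$ gives $\vec G\simeq\vec F\vec G\in\gRight(\vec F)$, that is $\vec G\gle_\gRight\vec F$.

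Since a fgv is compact iff it is rectifying (Theorem \ref{thm}), the plan is to build a rectifying $\vec G$ whose value depends on its input only through the $\vec F$-orbit of that input; then $Y$ and $Y\vec F$ yield the same output and $\vec F\vec G\simeq\vec G$ follows. The enabling device is the parametrized Turing fpc: for any $M$ the term $\Theta_M$ is a genuine fpc, the assignment $M\mapsto\Theta_M$ is injective by Proposition \ref{thetaz}, and $M$ never occupies head position. Consequently a head combinator of the form $G_0:=\lambda w.\Theta_{P w}$ makes $\vec G=(G_0)$ compact with modulus $1$, since $G_0(z)=\Theta_{P z}\in\FPC$ for every $z$ whatever the parameter map $P$ is; and for the same reason $\vec G$ is rectifying, as $Z\vec G=\Theta_{P(Z_1 G_0)}$ is a fpc for every wfpc $Z=(Z_n)$. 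It therefore remains only to choose $P$ so that the attached parameter is an $\vec F$-orbit invariant.

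For the parameter I would take a fixed point of ``append $\vec F$ to the argument'': using any fpc, set $P:=\Theta\bigl(\lambda p\,w.\,p(w F_0 F_1\cdots F_n)\bigr)$, so that $P w=P(w\vec F)$ for all $w$. Then $P w\thra P(w\vec F)\thra P(w\vec F^2)\thra\cdots$ never reaches head normal form, so every $P w$ is unsolvable, and two such terms are convertible precisely when their arguments lie in a common $\vec F$-orbit. The defining equation $P w=P(w\vec F)$ is exactly the absorption we want, now at the level of parameters.

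The genuine obstacle is that a generator never accesses its input as a first-class value: $\vec G$ consumes a (w)fpc $Y$ only through $Y$'s self-application, i.e.\ through its approximants $Y_k$, so the verification of $Y\vec G=(Y\vec F)\vec G$ does \emph{not} reduce to the clean identity $P Y=P(Y\vec F)$. Instead one must show that running the (deterministic) reduction of $\vec G$ on the approximants of $Y$ and on those of $Y\vec F$ produces parameters that are convertible. I expect to settle this by the method of Theorem \ref{thm} and Proposition \ref{ee}: mark the unfolding depth with a fresh variable, use that the reduction of the $\Theta_{P(\cdot)}$ scheme is forced, and check that precomposition by $\vec F$ only shifts this depth — a shift collapsed by $P w=P(w\vec F)$. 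This step is also where I expect to use the hypothesis $n\ge 1$: to guarantee that $\vec G$ is non-constant and that the $\vec F$-orbit invariant can be realized at the level of a generator. If the single-component scheme proves too rigid, I would fall back on $\vec G:=\vec F\gcat(G_0)$, which lies in $\gRight(\vec F)$ by construction and reduces the task to making $(G_0)$ absorb $\vec F$ on the image of $\vec F$.
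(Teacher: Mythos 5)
You are right that the letter of the statement is trivially witnessed by a constant generator ($Y(\kc Z)=Z=Y\vec F(\kc Z)$ for any fpc $Z$), and your reduction of the parenthetical claim to the main one is fine; the paper is evidently after a non-constant witness (which is also where the hypothesis $n\ge 1$ earns its keep), so let me assess your non-constant construction. Your key ingredient --- the $\vec F$-absorbing parameter $P=\Theta(\lambda p\,w.\,p(w\vec F))$ with $Pw=P(w\vec F)$ --- is exactly the combinator $G_{n+1}$ used in the paper's proof. The gap is in how that parameter gets attached. With $\vec G=(G_0)$ and $G_0=\lambda w.\Theta_{Pw}$, an fpc $Y$ gives $Y\vec G=\Theta_{P(YG_0)}$, while $(Y\vec F)\vec G=\Theta_{P(W_1G_0)}$ where $(W_k)=Y\vec F$; by Proposition \ref{thetaz} the required equality forces $P(YG_0)=P(W_1G_0)$. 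But the invariance of $P$ appends $\vec F$ to the \emph{whole} argument of $P$, whereas here $\vec F$ has been inserted \emph{before} $G_0$: the term to which $G_0$ is applied has changed from an approximant of $Y$ to an approximant of $Y\vec F$. No iteration of $Pw=P(w\vec F)$ relates these two terms, and the continuity machinery of Theorem \ref{thm} and Proposition \ref{ee} cannot help --- it shows that a conversion inspects only a finite approximant of its input, whereas what you need is the opposite: a way for the generator to recover the \emph{entire} input from the approximant it is actually handed. Your fallback $\vec G:=\vec F\gcat(G_0)$ relocates but does not remove the same mismatch.

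The paper closes exactly this gap by a move absent from your proposal: it sets $G_0:=F_0$ and exploits the head normal form $F_0=\lambda v_0\cdots v_l.\,v_m\vec Q$, where $m\ne 0$ because $Y\vec F=F_0(Y'F_0)F_1\cdots F_n$ must be solvable. After one unfolding, the head of $Y\vec G$ is a combinator $G_m$ of one's own choosing, with the approximant $Y'F_0$ captured as the value of the bound variable $v_0$; $G_m$ then \emph{rebuilds} $Y\vec F$ syntactically as $F_0v_0F_1\cdots F_n$ and outputs $\Theta_{G_{n+1}(F_0v_0F_1\cdots F_n)}=\Theta_{G_{n+1}(Y\vec F)}$. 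Now replacing $Y$ by $Y\vec F$ turns the parameter into $G_{n+1}((Y\vec F)\vec F)=G_{n+1}(Y\vec F\vec F)=G_{n+1}(Y\vec F)$ --- the extra $\vec F$ finally lands where the absorption law can cancel it. So your plan is missing the intermediate step ``capture the approximant via the hnf of $F_0$ and reconstitute $Y\vec F$ inside the parameter''; with that step supplied, your $P$ is precisely the right orbit invariant, and the rest of your outline (compactness via the $\Theta_{(\cdot)}$ scheme, rectification, and $\vec G\gle_\gRight\vec F$) goes through.
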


\begin{proof}
		First, recall that $F_0 = \lambda v_0..v_l. v_m \vec P$ is solvable.  Since $Y \vec F = F_0 (Y' F_0) F_1 \cdots F_n$, we also know that the head variable $v_m$ cannot be $v_0$, for otherwise the result would be unsolvable, while it must be a wfpc.

		We let $\vec G = (F_0,G_1,\cdots,G_{n+1})$.  We will only need to specify a couple of $G_i$s.
		Set $G_m = \lambda \vec p. \lam g_{l+1},...,g_{n+1}. \Theta_{g_{n+1} (F_0 v_0 F_1 \cdots F_n)}$, $G_{n+1} y = \Theta(\lambda gy.g(y \vec F)) = G_{n+1} (y \vec F)$.
		\begin{align*}
			Y \vec G &= G_0 (Y' G_0) G_1 \cdots G_{n+1}\\
			&= F_0 (Y' F_0) G_1 \cdots G_{n+1}\\
			&= G_m \vec P [v_0:=Y' F_0] [v_i := G_i]_{1 \le i \le l} G_{l+1}
			\cdots G_{n+1}\\
			&= \Theta_{G_{n+1} (F_0 (Y' F_0) F_1 \cdots F_n)}\\
			&= \Theta_{G_{n+1} (Y F_0 \cdots F_n)} \\
			&= \Theta_{G_{n+1} (Y \vec F)}  \tag{$\star$}\\
			&= \Theta_{G_{n+1} (Y \vec F \vec F)} \\
			&= \Theta_{G_{n+1} ((Y \vec F) \vec F)}\\
			&= (Y \vec F) \vec G \quad \textrm{by ($\star$), with $Y:=Y\vec F$} \qedhere
		\end{align*}
	\end{proof}
	Our final observation is a corollary to one of the first ones.

\begin{prop}
	The monoid $\gmon$ is \emph{zerosum-free}: If $\vf\vg \simeq ()$, then $\vf\simeq()\simeq\vg$.
\end{prop}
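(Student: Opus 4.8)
The plan is to read this off directly from Corollary \ref{triv}, the statement that the only wfgv fixing every fpc is the trivial one. First I would unfold the definition of extensional equality. Since $()$ is the identity of $\gmon$, we have $Y(\,) = Y$ for every fpc $Y$, so the hypothesis $\vec F \vec G \simeq ()$ says precisely that
$Y \vec F \vec G = Y$
for all fpc $Y$; in other words, the composite generator $\vec F \vec G$ fixes every fpc. Because $\vec F, \vec G \in \gmon$, their concatenation $\vec F \vec G$ is again a wfgv (the set of wfgvs is closed under composition, as noted in Examples \ref{exs}). Hence Corollary \ref{triv} applies to $\vec F \vec G$ and tells us that this composite is \emph{trivial} --- that is, it is literally the empty vector $()$.

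The second and final step is purely combinatorial. The monoid operation $\gcat$ is concatenation of finite sequences, and a concatenation $\vec F \gcat \vec G$ has length $0$ exactly when both $\vec F$ and $\vec G$ have length $0$. Since we have just established that $\vec F \vec G = ()$, it follows that $\vec F = ()$ and $\vec G = ()$ as vectors, and therefore in particular $\vec F \simeq () \simeq \vec G$. This is the desired conclusion (in fact a slightly stronger, non-extensional version of it).

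I do not expect any genuine obstacle here: the statement is a direct corollary of Corollary \ref{triv}, expressing the slogan that no generator built by appending extra terms can still send every fpc to itself unless nothing was appended. The only two points that require a moment's care are (i) verifying that the composite $\vec F \vec G$ is itself a wfgv, so that Corollary \ref{triv} is legitimately applicable, and (ii) noting that Corollary \ref{triv} delivers the \emph{literal} emptiness of $\vec F \vec G$, not merely $\vec F \vec G \simeq ()$; it is precisely this literal emptiness that lets the length argument split the single equation on the product into separate emptiness of each factor.
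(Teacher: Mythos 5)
Your proof is correct, and it ultimately rests on the same bedrock as the paper's argument --- Proposition \ref{noninj} --- but packages it differently. The paper applies non-injectivity to $\vec F$ alone: since $\vec G$ acts as a left inverse, $\vec F$ would be a split mono on (w)fpcs modulo beta, which Proposition \ref{noninj} rules out for non-trivial generators; concretely, one takes $Y \neq Y'$ with $Y \vec F = Y' \vec F$ and derives $Y = Y \vec F \vec G = Y' \vec F \vec G = Y'$, forcing $\vec F$ to be trivial, after which $\vec G = \vec F \vec G \simeq ()$ follows. You instead apply Corollary \ref{triv} to the composite $\vec F \vec G$, which fixes every fpc, conclude that the composite is \emph{literally} the empty vector, and then split by length of the concatenation. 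Since Corollary \ref{triv} is itself an immediate consequence of Proposition \ref{noninj}, the two routes are close cousins; yours has the minor advantage of delivering literal emptiness of both factors rather than only extensional triviality, and you correctly flag the two points that need checking (closure of wfgvs under concatenation, and the fact that ``trivial'' in the paper means equal to $()$ as a sequence, not merely $\simeq ()$). No gaps.
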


\begin{proof}
	Suppose $\vf\vg \simeq ()$.  Then, considered as endofunctions on $\WFPC/{=_\beta}$, $\vg$ acts as a left inverse of $\vf$, making $\vf$ a split mono (modulo beta).
	But we have seen in Proposition \ref{noninj} that no wfgv is injective, so no wfgv can be monic.
	Specifically, take $Y \neq Y'$ such that $Y \vf = Y' \vf$.
	Since $() \simeq \vf \vg$,  we have $Y = Y \vf \vg = Y' \vf \vg = Y'$, a contradiction.
\end{proof}

\section{Concluding remarks}

In this paper,we have broached the topic of abstract fpc generators.  Our first investigations revealed that these operators naturally fall into a few robust classes.  We established elementary relationships between these classes.

What becomes clear from our investigations is that there is yet much to be uncovered about the structure of fixed point combinators.  Some of the possible future research directions include the following.

\begin{enumerate}
	\item The most pressing issue is the status of Conjecture \ref{conj}.  All the evidence available points to this conjecture being true, yet current techniques in untyped lambda calculus decidedly come up short in settling the question.  However it will be decided, the insights to be gathered from the new approaches will greatly deepen our understanding of lambda terms.
	\item Of course, one could take the next step and ask whether the converse to the first claim in Proposition \ref{fix} is also valid.  Considering how difficult the former question is, this one will likely remain out of reach for the foreseeable future.
	\item What else can be said about the structure of the monoid $\gmon$?  Is the submonoid of accretive generators freely generated by the prime generators, as Conjecture \ref{free} asserts?  What about the (weakly) compact generators?  How can their compositional structure be characterized?
	\item Since the monoid of (w)fgvs naturally acts on the set of (w)fpcs, how much of the structure of fpcs is captured by this monoidal action?  Does every fpc have a representation in terms of the prime elements of the monoid --- again, modulo extensional equality, and the ideal of compact generators?
	\item Finally, while not directly relevant to the earier discussion, an answer to the following question could also shed light on recursion-theoretic properties of FPCs:

	Let $\yc$ be Curry's simplest fpc.  Is $\setof{\#M \mid M = \yc}$ a decidable subset of $\FPC$?
	Specifically, does there exist a term $\Delta_{\yc}$ satisfying, for all \emph{closed
	$Y \in \FPC$}, the following:
	\begin{align*}
		\Delta_{\yc} \ulcorner Y \urcorner x y = \begin{cases}
		                       	x & \yc =_\beta Y\\
		                       	y & \yc \neq_\beta Y
		                       \end{cases}
	\end{align*}
	Notice that Scott's theorem does not apply here because $\FPC$ is not all of $\Lambda$, but is only a computably enumerable subset of it.  $\Delta_{\yc}$ is allowed to diverge outside of this set.
\end{enumerate}
  The recent paper \cite{MPSS} proposes another approach to Statman's conjecture based on simple types.  We note that the generalization of the conjecture stated there is consistent with ours, since every simply-typed generator is accretive thanks to strong normalization of ($\yc$-free) typed terms.

I would like to thank Jan Willem Klop, Joerg Endrullis, Dimitri Hendriks, Giulio Manzonetto, and Stefano Guerrini for wonderful discussions about fpc generators.

I would also like to extend my gratitude to the anonymous referees whose many
suggestions have significantly improved this paper.

\newpage


\begin{thebibliography}{10}

\bibitem{B84}
H.~P. Barendregt.
\newblock {\em The lambda calculus, its syntax and semantics}.
\newblock Number 103 in Studies in Logic and the Foundations of Mathematics.
  North-Holland, second edition, 1984.

\bibitem{B1993}
H.~P. Barendregt.
\newblock Lambda calculi with types.
\newblock In S.~Abramsky, Dov~M. Gabbay, and S.~E. Maibaum, editors, {\em
  Handbook of Logic in Computer Science (Vol. 2)}, pages 117--309. Oxford
  University Press, Inc., New York, NY, USA, 1992.


\bibitem{Czajka2020}
L. Czajka.
\newblock A new coinductive confluence proof for infinitary lambda calculus.
\newblock {\em Logical Methods in Computer Science}, 16(1), 2020.

\bibitem{CoquandHerbelin1994}
Thierry Coquand and Hugo Herbelin.
\newblock A-translation and looping combinators in pure type systems.
\newblock {\em Journal of Functional Programming}, 4(1):77–88, 1994.

\bibitem{Curry1942}
Haskell~B. Curry.
\newblock The inconsistency of certain formal logics.
\newblock {\em Journal of Symbolic Logic}, 7(3):115–117, 1942.

\bibitem{EHKP}
J{\"{o}}rg Endrullis, Dimitri Hendriks, Jan~Willem Klop, and Andrew Polonsky.
\newblock Discriminating lambda-terms using clocked {B}\"ohm trees.
\newblock {\em Logical Methods in Computer Science}, 10(2), 2014.

\bibitem{EP11}
J{\"{o}}rg Endrullis and Andrew Polonsky.
\newblock Infinitary rewriting coinductively.
\newblock In Nils~Anders Danielsson and Bengt Nordstr{\"{o}}m, editors, {\em
  18th International Workshop on Types for Proofs and Programs, {TYPES} 2011,
  September 8-11, 2011, Bergen, Norway}, volume~19 of {\em LIPIcs}, pages
  16--27. Schloss Dagstuhl - Leibniz-Zentrum fuer Informatik, 2011.

\bibitem{Geuvers07}
Herman Geuvers and Joep Verkoelen.
\newblock On fixed point and looping combinators in type theory.
\newblock \url{http://www.cs.ru.nl/~herman/PUBS/TLCApaper.pdf}.

\bibitem{Howe1987}
Douglas~J. Howe.
\newblock The computational behaviour of {G}irard's paradox.
\newblock In {\em Proceedings of the Symposium on Logic in Computer Science
  (LICS '87), Ithaca, New York, USA, June 22-25, 1987}, pages 205--214. {IEEE}
  Computer Society, 1987.

\bibitem{Intrigila97}
Benedetto Intrigila.
\newblock Non-existent {S}tatman's double fixedpoint combinator does not exist,
  indeed.
\newblock {\em Inf. Comput.}, 137(1):35--40, 1997.

\bibitem{KKSV97}
Richard Kennaway, Jan~Willem Klop, M.~Ronan Sleep, and Fer{-}Jan de~Vries.
\newblock Infinitary lambda calculus.
\newblock {\em Theor. Comput. Sci.}, 175(1):93--125, 1997.

\bibitem{MPSS}
Giulio Manzonetto, Andrew Polonsky, Alexis Saurin, and Jakob~Grue Simonsen.
\newblock {The Fixed Point Property and a Technique to Harness Double Fixed
  Point Combinators}.
\newblock {\em Journal of Language and Computation}, 29(5):831--880, 2019.

\bibitem{MeyerReinhold1986}
Albert~R. Meyer and Mark~B. Reinhold.
\newblock "type" is not a type.
\newblock In {\em Conference Record of the Thirteenth Annual {ACM} Symposium on
  Principles of Programming Languages, St. Petersburg Beach, Florida, USA,
  January 1986}, pages 287--295. {ACM} Press, 1986.

\bibitem{Scott75}
Dana~S. Scott.
\newblock Some philosophical issues concerning theories of combinators.
\newblock In Corrado B{\"{o}}hm, editor, {\em Lambda-Calculus and Computer
  Science Theory, Proceedings of the Symposium Held in Rome, Italy, March
  25-27, 1975}, volume~37 of {\em Lecture Notes in Computer Science}, pages
  346--366. Springer, 1975.

\bibitem{Statman1993}
Richard Statman.
\newblock Some examples of non-existent combinators.
\newblock {\em Theoretical Computer Science}, 121(1{\&}2):441--448, 1993.

\end{thebibliography}
\end{document}